\title{An Empirical Evaluation of \texorpdfstring{$k$}{k}-Means Coresets} 
\author{Chris Schwiegelshohn}{Department of Computer Science, Aarhus University, Denmark }{schwiegelshohn@cs.au.dk}{[orcid]}{Independent Research Fund Denmark (DFF) Sapere Aude Research Leader grant No 1051-00106B.}
\author{Omar Ali Sheikh-Omar
}{Department of Computer Science, Aarhus University, Denmark}{omar@cs.au.dk}{https://orcid.org/0000-0002-0042-5231}{Innovation Fund Denmark under grant agreement No 0153-00233A.}
\authorrunning{C. Schwiegelshohn and O.A. Sheikh-Omar} 
\keywords{coresets, \texorpdfstring{$k$}{k}-means coresets, evaluation, benchmark} 
\newtheorem{fact}[theorem]{Fact}
\newcommand{\dist}{\text{dist}}
\newcommand{\eps}{\varepsilon}
\newcommand{\opt}{\text{OPT}}
\newcommand{\cost}{\text{cost}}
\newcommand{\calS}{\mathcal{S}}
\newcommand{\one}{\mathds{1}}
\begin{document}

\maketitle

\begin{abstract}
Coresets are among the most popular paradigms for summarizing data. In particular, there exist many high performance coresets for clustering problems such as $k$-means in both theory and practice. Curiously, there exists no work on comparing the quality of available $k$-means coresets. 

In this paper we perform such an evaluation. There currently is no algorithm known to measure the distortion of a candidate coreset. We provide some evidence as to why this might be computationally difficult.
To complement this, we propose a benchmark for which we argue that computing coresets is challenging and which also allows us an easy (heuristic) evaluation of coresets. Using this benchmark and real-world data sets, we conduct an exhaustive evaluation of the most commonly used coreset algorithms from theory and practice.
\end{abstract}

\section{Introduction}

The design and analysis of scalable algorithms has become an important research area over the past two decades. This is particularly important in data analysis, where even polynomial running time might not be enough to handle proverbial \emph{big data} sets.
One of the main approaches to deal with the scalability issue is to compress or sketch large data sets into smaller, more manageable ones. The aim of such compression methods is to preserve the properties of the original data, up to some small error, while significantly reducing the number of data points.

Among the most popular and successful paradigms in this line of research are \emph{coresets} \cite{MunteanuS18}. Informally, given a data set $A$, a coreset $\Omega \subset A$ with respect to a given set of queries $Q$ and query function $f: A\times Q \rightarrow \mathbb{R}_{\geq 0}$ approximates the behaviour of $A$ for all queries up to some multiplicative distortion $D$ via $\sup_{q\in Q} \max\left( \frac{f(\Omega,q)}{f(A,q)},\frac{f(A,q)}{f(\Omega,q)}\right) \leq D.$
Coresets have been applied to a number of problems such as computational geometry \cite{AHV05,Chan09}, linear algebra \cite{IndykMGR20,maalouf2019fast}, and machine learning \cite{MMR21,MunteanuSSW18}. But the by far most intensively studied and arguably most successful applications of the coreset framework is the $k$-clustering problem.

Here we are given $n$ points $A$ with (potential unit) weights $w:A\rightarrow \mathbb{R}_{\geq 0}$ in some metric space with distance function $\dist$ and aim to find a set of $k$ centers $C$ such that 
\begin{equation*}
\cost_A(C):= \frac{1}{n} \sum_{p\in A}  \min_{c\in C} w(p)\cdot \dist^z(p,c)
\end{equation*}
is minimized. The most popular variant of this problem is probably the $k$-means problem in $d$-dimensional Euclidean space where $z=2$ and $\dist(x,y) = \sqrt{\sum_{i=1}^d (x_i-y_i)^2}$.

A $(k,\varepsilon)$-coreset is now a subset $\Omega\subset A$ with weights $w:\Omega\rightarrow \mathbb{R}_{\geq 0}$ such that for any set of $k$ centers $C$
\begin{equation}
\label{eq:coreset}
\sup_{C} \max\left( \frac{\cost_A(C)}{\cost_{\Omega}(C)},\frac{\cost_{\Omega}(C)}{\cost_{A}(C)}\right) \leq 1+\varepsilon.
\end{equation}
The coreset definition in \cref{eq:coreset} provides an upper bound for the distortion of all candidate solutions i.e., all possible sets of $k$ centers. 
A \emph{weak coreset} is a relaxed guarantee that holds for optimal or nearly optimal clusterings of $A$ instead of all clusterings.

In a long line of work spanning the last 20 years
\cite{BecchettiBC0S19,BravermanJKW21,Chen09,Cohen-AddadSS21b,Cohen-AddadSS21,FeldmanL11,FeldmanSS20,
HaM04,HaK07,HuangJLW18,huang2020coresets,BravermanJKW21,
LangbergS10,SohlerW18}, the size of coresets has been steadily improved with the current state of the art yielding a coreset with $\tilde{O}(k \varepsilon^{-2} \cdot \min(d,k,\varepsilon^{-2}))$ points for a distortion $D\leq (1+\varepsilon)$ due to \cite{CLSS22}\footnote{We use $\tilde O(x)$ to hide $\log^c x$ terms for any constant $c$.}.

While we have a good grasp of the theoretical guarantees of these algorithms, our understanding of the empirical performance is somewhat lacking. There exist a number of coreset implementations, but it is usually difficult to assess which implementation summarizes the data best. To accurately evaluate a given coreset, we would need to come up with a $k$ clustering $C$ which results in a maximal distortion. Solving this problem is likely difficult: related questions such as deciding whether a 3-dimensional point set $A$ is an $\varepsilon$-net of a set $B$ with respect to convex ranges is co-NP hard \cite{GiannopoulosKWW12}. 
It is similarly co-NP hard to decide whether a point set $A$ is a \emph{weak coreset} of a point set $B$ (see \cref{prop:hardness} in the appendix). 

Due to this difficulty, a common heuristic for evaluating coresets is as follows~\cite{AckermannMRSLS12,FGSSS13}. First, compute a coreset $\Omega$ with the available algorithm(s) using some input data $A$. Then, run an optimization algorithm on $\Omega$ to compute a $k$ clustering. The \emph{best} coreset algorithm is considered to be the one which yields a clustering with the smallest cost.

This practice has substantial drawbacks.
The first is that this evaluation method conflates the two separate tasks of coreset construction and optimization.
It is important to note that the first step of virtually all coreset algorithms is a low-cost (bicriteria) constant factor approximation, i.e. a solution with $\beta\cdot k$ clusters that costs at most $\alpha\cdot \opt$, where $\opt$ is the cost of an optimal $k$ clustering.
Given that this initial solution has an $\alpha$ approximation to the cost, a routine calculation shows that the additive error of the coreset, i.e. the maximum difference
$ \left\vert \cost_A(C) - \cost_B(C) \right\vert $
over all solutions $C$ is at most $O(\alpha)\cdot \cost_A(C)$. 
In particular, in the case that the initial bicriteria approximation has $\alpha \ll 2$, which is not too difficult to achieve with more than $k$ centers, any $\gamma$ approximation algorithm will find solutions with approximation factor $O(\gamma + \alpha) \cdot\opt$. In particular, the distortion may be unbounded, for example if $B$ only consists of the $k$ centers, while simply returning $B$ itself yields a low cost clustering. Thus, it is difficult to measure coreset quality in this way.

The second drawback is that this practice will mainly measure the performance of the optimization algorithm, rather than the performance of the coreset algorithms. During its execution it might simply not consider any solution with high distortion. For example, if the approximation factor $\gamma$ of the solution returned by the algorithm is large then this solution (as well as any even higher cost solution considered during the algorithm’s execution) will have a low distortion. 

The third drawback of this evaluation method is that it does not consider the main use cases of coresets, nor the full power of their guarantee. Indeed, if speeding up the computation of an optimization algorithm, one would hardly need a strong coreset; approximating the cost of every candidate solution, as weaker coreset definitions (or indeed a bicriteria approximation) would be suitable as well. A coreset's main and most powerful feature is \emph{composability}, i.e. given two disjoint point sets $X$ and $Y$, the union of a coreset of $X$ and a coreset of $Y$ is a coreset. Composability is what enables coresets to scale to massively parallel computation models and enables simple streaming algorithms via the merge and reduce technique. To which degree a coreset is composable is generally not a property of an optimal clustering of the point set, as optimal solutions $C_X$ of $X$ or $C_Y$ of $Y$ may have little in common with an optimal solution of $X\cup Y$.

The purpose of this study is to systematically evaluate the quality of various coreset algorithms for $k$-means. As such, we develop a new evaluation procedure which estimates the distortion of coreset algorithms. On real-world data sets, we observe that while the evaluated coreset algorithms are generally able to find solutions with comparable costs, there is a stark difference in their distortions. This shows that differences between optimization and compression are readily observable in practice.

As a complement to our evaluation procedure on real-world data sets, we propose a benchmark framework for generating synthetic data sets. We argue why this benchmark has properties that results in hard instances for all known coreset constructions. We also show how to efficiently estimate the distortion of a candidate coreset on the benchmark.

\section{Coreset Algorithms}
\label{sec:algorithms}

Though the algorithms vary in details, coreset constructions come in one of the following two flavours:

\begin{enumerate}
\item {\bf Movement-based constructions:} Such algorithms compute a coreset $\Omega$ with $T$ points given some input point set $A$ such that $\cost_{\Omega}(C)\ll \opt$, where $\opt$ is the cost of an optimal $k$-means clustering of $A$. 
The coreset guarantee then follows as a consequence of the triangle inequality. These algorithms all have an exponential dependency on the dimension $d$, and therefore have been overtaken by sampling-based methods. Nevertheless, these constructions are more robust to various constrained clustering formulations~\cite{HuangJV19,SSS19} and continue to be popular. Examples from theory include~\cite{FrahlS2005,HaM04}. 

\item {\bf Importance sampling:} Points are sampled proportionate to their impact on the cost of any given candidate solution. The idealized distribution samples proportionate to the sensitivity which for a point $p$ is defined as $sens(p):=\sup_{C} \frac{\min_{c\in C} \dist^2(p,c)}{\cost_A(C)}$ and weighted by their inverse sampling probability. The sensitivities are hard to compute exactly but much work exists on how to find other distributions with very similar properties. In terms of theoretical performance, sensitivity sampling has largely replaced movement-based constructions, see for example~\cite{FeldmanL11,LangbergS10}.  
\end{enumerate}

Of course, there exist algorithms that draw on techniques from both, see for example~\cite{Cohen-AddadSS21}. In what follows, we will survey implementations of various coreset constructions that we will evaluate later.

{\bf StreamKM++~\cite{AckermannMRSLS12}:} The popular $k$-means++ algorithm~\cite{ArV07} computes a set of centers $K$ by iteratively sampling a point $p$ in $A$ proportionate to $\min_{q\in K} \dist^2(p,q)$ and adding it to $K$. The procedure terminates once the desired number of centers has been reached. The first center is typically picked uniformly at random.
The StreamKM++ paper runs the $k$-means++ algorithms for $T$ iterations, where $T$ is the desired coreset size. At the end, every point $q$ in $K$ is weighted by the number of points in $A$ closest to it. While the construction has elements of importance sampling, the analysis is largely movement-based. The provable bound required for the algorithm to compute a coreset is $O\left(\frac{k\log n}{\delta^{d/2}\varepsilon^d}\cdot \log^{d/2} \frac{k\log n}{\delta^{d/2}\varepsilon^d}\right)$. Despite its simplicity, its running time compares unfavourably to all other constructions.

{\bf BICO~\cite{FGSSS13}:} BICO combines the very fast, but poor quality clustering algorithm BIRCH~\cite{ZRL97} with the movement-based analysis from~\cite{FrahlS2005,HaM04}. The clustering is organized by way of a hierarchical decomposition: When adding a point $p$ to one of the coreset points $\Omega$ at level $i$, it first finds the closest point $q$ in $\Omega$. If $p$ is too far away from $q$, a new cluster is opened with center at $p$. Otherwise $p$ is either added to the same cluster as $q$, or, if adding $p$ to $q$'s cluster increases the clustering cost beyond a certain threshold, the algorithm attempts to add $p$ to the child-clusters of $q$. The procedure then continues recursively. The provable bound required for the algorithm to compute a coreset is $O\left(k\varepsilon^{-d-2}\log n\right)$.

{\bf Ray Maker~\cite{HaK07}:} The algorithm computes an initial solution with $k$ centers which is a constant factor approximation of the optimal clustering. Around each center, $O(1/\epsilon^{d-1})$ random rays are created which span the hyperplane. Next, each point $p \in A$ is snapped to its closest ray resulting in a set of one-dimensional points associated with each ray. Afterwards, a coreset is created for each ray by computing an optimal 1D clustering with $k^2/\epsilon^2$ centers and weighing each center by the number of points in each cluster. The final coreset is composed of the coresets computed for all the rays.
The provable bound required for the algorithm to compute a coreset is $O(k^3 \cdot \varepsilon^{-d-1})$. The algorithm has recently received some attention due to its applicability to the fair clustering problem~\cite{HuangJV19}.

{\bf Sensitivity Sampling~\cite{FeldmanL11}:} The simplest implementation of sensitivity sampling first computes an $(O(1),O(1))$ bicriteria approximation\footnote{An $(\alpha,\beta)$ bicriteria approximation computes an $\alpha$ approximation using $\beta\cdot k$ many centers.}, for example by running $k$-means++ for $2k$ iterations~\cite{Wei16}. Let $K$ be the $2k$ clustering thus computed and let $K_i$ be an arbitrary cluster of $K$ with center $q_i$. Subsequently, the algorithm picks points proportionate to $\frac{\dist^2(p,q)}{\cost_{K_i}(\{q_i\})} + \frac{1}{|K_i|}$ and weighs any point by its inverse sampling probability. Let $|\hat{K_i}|$ be the estimated number of points in the sample. Finally, the algorithm weighs each $q_i$ by $(1+\eps)\cdot |K_i| - |\hat{K_i}|$. The provable bound required for the algorithm to compute a coreset is $\tilde O\left(kd\varepsilon^{-4}\right)$ (\cite{FeldmanL11}),
$\tilde O\left(k\varepsilon^{-6}\right)$ (\cite{huang2020coresets}), or $\tilde O\left(k^2\varepsilon^{-4}\right)$ (\cite{BravermanJKW21}).

{\bf Group Sampling~\cite{Cohen-AddadSS21}:} First, the algorithm computes an $O(1)$ approximation (or a bicriteria approximation) $K$. Subsequently, the algorithm preprocesses the input into groups such that (1) for any two points $p,p'\in K_i$, their cost is identical up to constant factors and (2) for any two clusters $K_i,K_j$, their cost is identical up to constant factors. In every group, Group Sampling now samples points proportionate to their cost. The authors of~\cite{Cohen-AddadSS21} show that there always exist a partitioning into $\log^2 1/\varepsilon$ groups. Points not contained in a group are snapped to their closest center $q$ in $K$. $q$ is weighted by the number of points snapped to it. The provable bound required for the algorithm to compute a coreset is $\tilde O\left(k\varepsilon^{-2}\min(d,k,\varepsilon^{-2})\right)$ (\cite{CLSS22}). While this improves over sensitivity sampling, it is generally slower and not as easy to implement.

Finally, we note that some of the more popular algorithms in theory have not been mentioned here. For example, Chen's \cite{Chen09} construction is particularly popular among theoreticians. The Group Sampling algorithm by \cite{Cohen-AddadSS21} is an extension and improvement of Chen's method. Thus, the performance of Group Sampling is also indicative of Chen's algorithm.

\paragraph*{Dimension Reduction}
Finally, we also combine coreset constructions with a variety of dimension reduction techniques. Starting with~\cite{DrineasFKVV04}, a series of results \cite{BecchettiBC0S19,BoutsidisMD09,BoutsidisZD10,BoutsidisZMD15,CEMMP15,Cohen-AddadS17,FeldmanSS20,FKW19,KuK10,MakarychevMR19,SohlerW18} explored the possibility of using dimension reduction methods for $k$-clustering, with a particular focus on principal component analysis (PCA) and random projections. The seminal paper by Feldman, Schmidt, and Sohler~\cite{FeldmanSS20} was the first to use dimension reduction to obtain smaller coresets for $k$-means. Movement-based coresets in particular often have an exponential dependency on the dimension, which can be alleviated with some form of dimension reduction, both in theory~\cite{SSS19} and in practice~\cite{KappmeierS015}.
There are essentially two main dimension reduction techniques for coresets.

{\bf Principal Component Analysis:} Feldman, Schmidt, and Sohler~\cite{FeldmanSS20} showed that projecting an input $A$ onto the first $O(k/\varepsilon^2)$ principal components is a coreset. This coreset still consists of $n$ points, but they now lie in low dimension. The analysis was subsequently tightened by~\cite{CEMMP15} and extended to other center-based cost functions by~\cite{SohlerW18}. Although its target dimension is generally worse than those based on random projections and terminal embeddings, there is nevertheless reasons for using PCA regardless: It removes noise and thus may make it easier to compute a high quality coreset. For more applications of PCA to $k$-means clustering, we refer to

{\bf Terminal Embeddings:} Given a set of points $A$ in $\mathbb{R}^D$, a terminal embedding $f:\mathbb{R}^D\rightarrow \mathbb{R}^d$ preserves the pairwise distance between any point $p\in A$ and any point $q\in \mathbb{R}^D$ up to a $(1\pm \varepsilon)$ factor. The statement is related to the famous Johnson-Lindenstrauss lemma but it is stronger as it does not apply to only the pairwise distances of $A$. Nevertheless, the same target dimension is sufficient. Terminal embeddings were studied by~\cite{CherapanamjeriN21,ElkinFN17,MahabadiMMR18,NaN18}, with Narayanan and Nelson \cite{NaN18} achieving an optimal target dimension of $O(\varepsilon^{-2}\log n)$, where $n$ is the number of points. We note that terminal embeddings, combined with an iterative application of the coreset construction from \cite{BravermanJKW21}, can reduce the target dimension to a factor $\tilde{O}(\varepsilon^{-2} \log k)$. This is mainly of theoretical interest, as in practice the deciding factor wrt the target dimension is the precision, rather than dependencies on $\log n$ and $\log k$. For applications to coresets, we refer to \cite{BecchettiBC0S19,Cohen-AddadSS21,huang2020coresets}. For an empirical evaluation of random projections, which form the basis of all known terminal embeddings, we refer to Venkatsubramanian and Wang~\cite{VenkatasubramanianW11}.

\section{Benchmark Construction}
\label{sec:benchmark}

In this section, we describe our benchmark. We start by describing the aims of the benchmark, followed by giving the construction. Our aim is to generate a data set containing many clusterings with the following properties. 
\begin{enumerate}
\item The benchmark has many clusterings that, in a well defined sense, are highly dissimilar. Specifically, we want the overlap between any two clusters of different clusterings to be small.
\item The different clusterings have very similar and low cost. This ensures that despite the solutions being different in terms of composition and center placement, a good coreset has to consider them equally regarding distortion.
\item The clusterings are induced by a minimal cost assignment of input points to a set of centers in $\mathbb{R}^d$. This final property ensures that the coreset guarantee has to apply to these clusterings.
\end{enumerate}

To generate the benchmark, we now use the following construction. The benchmark has a parameter $\alpha$ which controls the number of points and dimensions of the generated data instance.
For a given value of $k$, the benchmark instance consists of $n=k^\alpha$ points and $d=\alpha \cdot k$ dimensions, i.e. we will construct and $n\times d$ matrix $A$ where every row corresponds to an input point and every column corresponds to one of the dimensions.

Let $\one_k$ be the $k$-dimensional all-one vector and $v_i^1$ be the $k$-dimensional vector with entries $(v_i^1)_j = \begin{cases}-\frac{1}{k} & \text{if } i\neq j\\
\frac{k-1}{k} & \text{if } i= j\end{cases}$.
For $\ell\leq \alpha$, recursively define the $k^{\ell}$ dimensional vector $v_i^{\ell} = v_i^{\ell-1}\otimes \one_k $, where $\otimes$ denotes the Kronecker product, i.e. $ v_i^{\ell-1}\otimes \one_k= \begin{bmatrix}
(v_i^{\ell-1})_1 \cdot \one_k \\
(v_i^{\ell-1})_2 \cdot \one_k \\
\vdots \\
(v_i^{\ell-1})_1 \cdot \one_k
\end{bmatrix}$. 
Finally, set the $t$-th column of $A$, for $t = a\cdot k + b$, $a\in \{0,\ldots \alpha-1\}$ and $b \in \{1,\ldots k\}$, to be $\one_{k^{\alpha-a+1}}\otimes v_b^{a+1}$.

To get a better feel for the construction, we have given two small example instances for $k=2$ and $k=3$ in Figue \cref{fig:benchmark-small-instances}.
\begin{figure*}[h]
\begin{center}
$$ 
\begin{bmatrix}
\frac{1}{2} & -\frac{1}{2} & \frac{1}{2} & -\frac{1}{2} & \frac{1}{2} & -\frac{1}{2}  \\
-\frac{1}{2} & \frac{1}{2} & \frac{1}{2} & -\frac{1}{2} & \frac{1}{2} & -\frac{1}{2} \\
\frac{1}{2} & -\frac{1}{2} & -\frac{1}{2} & \frac{1}{2} & \frac{1}{2} & -\frac{1}{2}  \\
-\frac{1}{2} & \frac{1}{2} & -\frac{1}{2} & \frac{1}{2} & \frac{1}{2} & -\frac{1}{2} \\
\frac{1}{2} & -\frac{1}{2} & \frac{1}{2} & -\frac{1}{2} & -\frac{1}{2} & \frac{1}{2}  \\
-\frac{1}{2} & \frac{1}{2} & \frac{1}{2} & -\frac{1}{2} & -\frac{1}{2} & \frac{1}{2}  \\
\frac{1}{2} & -\frac{1}{2} & -\frac{1}{2} & \frac{1}{2} &-\frac{1}{2} & \frac{1}{2} \\
-\frac{1}{2} & \frac{1}{2} & -\frac{1}{2} & \frac{1}{2}  & -\frac{1}{2} & \frac{1}{2} \\
\end{bmatrix} ~~~~~~~~
\begin{bmatrix}
 \frac{2}{3}   & -\frac{1}{3}  & -\frac{1}{3} & \frac{2}{3}   & -\frac{1}{3}  & -\frac{1}{3}   \\
 -\frac{1}{3}  & \frac{2}{3}   & -\frac{1}{3} & \frac{2}{3}   & -\frac{1}{3}  & -\frac{1}{3}   \\
 -\frac{1}{3}  & -\frac{1}{3}  & \frac{2}{3}  & \frac{2}{3}   & -\frac{1}{3}  & -\frac{1}{3}   \\
 \frac{2}{3}   & -\frac{1}{3}  & -\frac{1}{3} & -\frac{1}{3}  & \frac{2}{3}   & -\frac{1}{3}   \\
 -\frac{1}{3}  & \frac{2}{3}   & -\frac{1}{3} & -\frac{1}{3}  & \frac{2}{3}   & -\frac{1}{3}   \\
 -\frac{1}{3}  & -\frac{1}{3}  & \frac{2}{3}  & -\frac{1}{3}  & \frac{2}{3}   &  -\frac{1}{3}  \\
 \frac{2}{3}   & -\frac{1}{3}  & -\frac{1}{3} & -\frac{1}{3}  & -\frac{1}{3}  & \frac{2}{3}    \\
 -\frac{1}{3}  & \frac{2}{3}   & -\frac{1}{3} & -\frac{1}{3}  & -\frac{1}{3}  & \frac{2}{3}    \\
 -\frac{1}{3}  & -\frac{1}{3}  & \frac{2}{3}  & -\frac{1}{3}  & -\frac{1}{3}  & \frac{2}{3}    \\
\end{bmatrix} 
$$
\end{center}
\caption{Benchmark construction for $k=2$ and $\alpha=3$ (left) and $k=3$ and $\alpha=2$ (right).}
\label{fig:benchmark-small-instances}
\end{figure*}

\paragraph*{Properties of the Benchmark}

We now summarize the key properties of the benchmark.
To this end, we require a few notions.
Let $A$ be the input matrix. We slightly abuse notation and refer to $A_i$ as both the $i$th point as well as the $i$th row of the matrix $A$.
For a clustering $\mathcal{C}=\{C_1,\ldots ,C_k\}$, we define that the $n\times k$ indicator matrix $\tilde X$ induced by $\mathcal{C}$ via $\tilde X_{i,j} = \begin{cases}1 & \text{if } A_i\in C_j \\
0 & \text{else.} \end{cases}$
Furthermore, we will also use the $n\times k$ normalized clustering matrix $ X$ defined as
$X_{i,j} = \begin{cases}\frac{1}{\sqrt{|C_i|}} & \text{if } A_i\in C_j \\
0 & \text{else.} \end{cases}$
We also recall the following lemma which will allow us to express the $k$-means cost of a clustering $\mathcal{C}$ with optimally chosen centers in terms of the cost of $X$ and $A$.
\begin{lemma}[Folklore]
\label{lem:magic}
Let $A$ be an arbitrary set of points and let $\mu(A) = \frac{1}{|A|}\sum_{p\in A} p$ be the mean. Then $ \sum_{p\in A} \|p-c\|^2 = |A|\cdot \|\mu(A)-c\|^2 + \sum_{p\in A} \|p-\mu(A)\|^2$ for any point $c$.
\end{lemma}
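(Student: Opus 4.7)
}

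The plan is to perform a direct algebraic expansion in the spirit of the bias–variance decomposition, exploiting the defining property of the mean that $\sum_{p \in A} (p - \mu(A)) = 0$. Write $\mu := \mu(A)$ for brevity, and for each $p \in A$ decompose the vector $p - c$ as $(p - \mu) + (\mu - c)$. I would then expand the squared norm using $\|u + v\|^2 = \|u\|^2 + 2\langle u, v\rangle + \|v\|^2$, which gives
\[
\|p - c\|^2 = \|p - \mu\|^2 + 2\langle p - \mu,\, \mu - c\rangle + \|\mu - c\|^2.
\]

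Next I would sum this identity over all $p \in A$. The first term sums to $\sum_{p \in A} \|p - \mu\|^2$, which matches the final term in the claimed equation. The last term does not depend on $p$, so summing contributes $|A| \cdot \|\mu - c\|^2$. For the cross term, pull the fixed vector $\mu - c$ out of the inner product to obtain
\[
\sum_{p \in A} 2\langle p - \mu,\, \mu - c\rangle = 2\Bigl\langle \sum_{p \in A}(p - \mu),\, \mu - c\Bigr\rangle.
\]
By the definition of $\mu = \frac{1}{|A|}\sum_{p \in A} p$, the inner sum satisfies $\sum_{p \in A} (p - \mu) = \sum_{p \in A} p - |A|\mu = 0$, so the cross term vanishes and the desired identity follows. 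There is no real obstacle here; the only subtle point is that the argument works verbatim in any inner product space where a mean is defined, so no assumption on the dimension or on $c \in A$ is needed.
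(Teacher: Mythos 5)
Your proof is correct: the decomposition $p-c=(p-\mu)+(\mu-c)$, expansion of the squared norm, and the vanishing cross term via $\sum_{p\in A}(p-\mu)=0$ is exactly the standard argument for this folklore identity, which the paper itself states without proof. Nothing is missing, and your remark that the argument works in any inner product space is accurate.
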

This lemma proves that for any given cluster $C_j$, the mean is the optimal choice of center. 
We also note that any two distinct columns of $X$ are orthogonal. Furthermore $\frac{1}{n}\mathbf{1}\mathbf{1}^TA$ copies the mean into every entry of $A$. Combining these two observations, we see that the matrix $XX^TA$ maps the $i$th row of $A$ onto the mean of the cluster it is assigned to. Finally, define the Frobenius norm of an $n\times d$ $A$ by $\|A\|_F = \sqrt{\sum_{i=1}^n\sum_{j=1}^d A_{i,j}^2}$. Then the $k$-means cost of the clustering $\mathcal{C}$ is precisely
$\|A-XX^TA\|_F^2.$

We also require the following distance measure on clusterings as proposed by Meila~\cite{Meila05,Meila06}. Given two clusterings $\mathcal{C}$ and $\mathcal{C'}$, the $k\times k$ confusion matrix $M$ is defined as
$ M_{i,j} = |C_i\cap C'_j|.$
Furthermore for the indicator matrices $\tilde X$ and $\tilde X'$ induced by $\mathcal{C}$ and $\mathcal{C'}$ we have the identity $M=\tilde X^T {\tilde X'}$.
Denote by $\Pi_k$ the set of all permutations over $k$ elements. Then the distance between  $\mathcal{C}$ and $\mathcal{C'}$ is defined as $d(\mathcal{C},\mathcal{C'}) = 1-\frac{1}{n}\underset{\pi\in \Pi_k}{\max} \sum_{i=1}^k M_{i,\pi(i)}.$
Observe that for clusters that are identical, their distance is $0$. The maximum distance between any two $k$ clusterings is always $\frac{k-1}{k}$.

The solutions we consider are given as follows. For the columns $a\cdot k+1,\ldots (a+1)\cdot k$, we define the clustering $\mathcal{C}^{a} = \{C_1^a,\ldots C_k^a\}$ with 
$A_i\in C_j^a$ if and only if $A_{i,j} > 0$. Let $\tilde X^a$ and $X^{a}$ denote the indicator matrix and clustering matrix, respectively, as induced by $\mathcal{C}^{a}$.
These clusterings satisfy the properties we stated at the beginning of this section, that is:
\begin{enumerate}
\item The distance between these clustering is $1-\frac{1}{k}$, i.e. it is maximized.
\item The clusterings have equal cost and the centers in each clustering have equal cost.
\item The clusterings are induced by a set of centers in $\mathbb{R}^d$.
\end{enumerate}

\paragraph*{Benchmark Evaluation}

We now describe how we use the benchmark to measure the distortion of a coreset. Assume for now that the coresets are subsets of the original input points. The extension to coresets that do not consist of input points is described at the end of this section.

Consider the clustering $\mathcal{C}^{a} = \{C_1^a,\ldots C_k^a\}$ for some $a$ and let $\Omega$ with weights $w:\Omega\rightarrow \mathbb{R}_{\geq 0}$ be the coreset and let $\delta>0$ be a parameter. 
Note that there are $\alpha$ many such clusterings, for each value of $a$.
We use $w(C_i^a \cap \Omega):=\sum_{p\in C_i^a \cap \Omega} w(p)$ to denote the mass of points of $C_i^a$ in $\Omega$.
For every cluster $C_i^a$ with $w(C_i^a \cap \Omega)\geq |C_i| (1-\delta)$, we place a center at $\mu(C_i^a)$. Conversely, if $w(C_i^a \cap \Omega)< |C_i^a| (1-\delta)$, we do not place a center at $\mu(C_i^a)$. We call such clusters \emph{deficient}. Let $\calS$ be the centers of these deficient clusters.

We now compare the cost as computed on the coreset and the true cost of $\calS$. Due to \cref{lem:magic} and the fact that all clusters have equal cost, we may write for any deficient cluster $C_i^a$
$\cost_{C_i^a}(\calS) = \cost_{C_j^a}(\{\mu(C_j^a)\}) + k^{\alpha-1}\|\mu(C_j^a)-\mu(C_h^a)\|_2^2$, where $C_h^a$ is a non-deficient cluster.
Thus, the cost is $\cost_{C_i^a}(\calS) \approx \left(1+\frac{2}{\alpha}\right)\cdot \cost_{C_j^a}(\{\mu(C_j^a)\}).$

Conversely, the cost on the coreset is 
\begin{align*}
\cost_{\Omega\cap C_i^a}(\calS)  
 \approx  \frac{w(C_i^a \cap \Omega)}{ \cost_{C_j^a}(\{\mu(C_j^a)\})}\left(1+\frac{2}{\alpha}\right)\cdot \cost_{C_j^a}(\{\mu(C_j^a)\}).
\end{align*}
Thus for each deficient clustering individually, the distortion will be close to $\frac{k^{\alpha-1}}{w(C_i^a \cap \Omega)} \geq \frac{1}{1-\varepsilon}$.
If there are many deficient clusters, then this will also be the overall distortion.
For all possible (suitably discretized) thresholds for deficiency, i.e. all values of $\delta$, we can now identify the clustering $\mathcal{C}^a$ with a maximum number of deficient clusters and use the aforementioned construction to get a lower bound on the distortion.

To extend this evaluation to coresets where the points are not part of the input, we consider a point $p\in \Omega$ to be in $C_i^a$ if it is closer to $\mu(C_i^a)$ than to to $\mu(C_j^a)$.

\section{Experiments} \label{sec:experiments}
In this section, we present how we evaluated different algorithms. First, we propose our evaluation procedure which gauges the quality of coresets. Then, we describe the data sets used for the empirical evaluation and our experimental setup. Finally, we detail the outcome of the experiments and our interpretation of the results.

\subsubsection*{Evaluation Procedure}
\label{sec:evaluation-procedure}
Accurately evaluating a $k$-means coreset of a real-world data set requires constructing a solution (a set of $k$ centers) which results in a maximal distortion. Finding such a solution, however, is difficult. Instead, we can estimate the quality of a given coreset by finding meaningful candidate solutions. 

A first attempt can be to randomly generate candidate solutions. It is not readily apparent how to define a distribution of meaningful solutions from which to sample. One could, for instance, generate $k$ random points inside the convex hull or the minimum enclosing ball (MEB) of a coreset $\Omega$. Convex hulls in high dimensions are infeasible to compute, so we sample a center by choosing random convex combination of the centers of the initial bicriteria approximation computed for every coreset. 
A better way to generate candidate solutions turns out to be $k$-means++, where we sample $k$ points with respect to the $k$-means++ distribution and use the resulting centers as a solution (see \cref{sec:candidate-solution-generation} in the appendix). The main advantage of this approach is that $k$-means++ can uncover natural cluster structures in the data, which uniform sampling generally does not.
For all variants, we generated $5$ candidate solutions, where the candidate solution with the largest distortion being a lower bound for the true distortion of the coreset.

Given the usefulness of evaluating coresets on real-world data sets, it can be tricky to gauge the general performance of coreset algorithms using only a small selection of data sets. For this reason, we used our benchmark to complement the evaluation on real-world data sets. The benchmark accomplishes two important tasks. First, the benchmark allows us to quickly find a bad solution because both good and bad clusterings are known a priori. It is unclear how to find bad clusterings for real-world data sets. Second, it is easier to make a fair comparison of different coreset constructions because the benchmark is known to generate hard instances for all known coreset algorithms. This cannot be said for real-world data sets. For the benchmark, we computed the distortion following the evaluation procedure described in~\cref{sec:benchmark}. 

Every randomized coreset construction was repeated $10$ times. We aggregated the reported maximum distortions for every run by taking the average over all $10$ evaluations. 
It is important to not aggregate the distortions here by taking the maximum over all runs: If one run of the coreset algorithm fails but the others succeed, then such an aggregation predicts far worse distortion than what we could typically expect.

\begin{table}
	\begin{center}
	\begin{tabular}{lrr}
		\toprule
        
		    & Data points
		    & Dimensions
            \\
		\midrule
		\textit{Caltech}
    		& 3,680,458
    		& 128
    		\\
		\textit{Census}
    		& 2,458,285
    		& 68
    		\\
	    \textit{Covertype}
    	    & 581,012
    		& 54
    		\\
	    \textit{NYTimes}
    	    & 500,000
    		& 102,660
    		\\
        \textit{Tower}
            & 4,915,200
    		& 3
    		\\
		\bottomrule
	\end{tabular}\\
	\caption{The sizes of the real-world datasets used for the experimental evaluation}
	\label{tab:real-world-datasets-overview}
	\end{center}
\end{table}

\begin{table}
	\begin{center}
	\begin{tabular}{rrrr}
		\toprule
        $k$
		    & $\alpha$
		    & Data points
		    & Dimensions
            \\
		\midrule
        10
    		& 6
    		& 1,000,000
    		& 60
    		\\
        20
    		& 5
    		& 3,200,000
    		& 100
    		\\
        30
    		& 4
    		& 810,000
    		& 120
    		\\
        40
    		& 4
    		& 2,560,000
    		& 160
    		\\
		\bottomrule
	\end{tabular}\\
	\end{center}
    \caption{The parameter values and the sizes of the benchmark instances used for the experimental evaluation.}
	\label{tab:benchmark-instances-overview}
\end{table}

\subsubsection*{Data sets}
We conducted experiments on five real-world data sets
\textit{Census},
\textit{Covertype},
\textit{Tower},
\textit{Caltech},
\textit{NYTimes},
and four instances of our benchmark. Benchmark instances were generated to match approximately the sizes of the real-world data sets. 
The sizes of the considered data sets are given in Table \ref{tab:real-world-datasets-overview}.

The \textit{Census}\footnote{\url{https://archive.ics.uci.edu/ml/datasets/US+Census+Data+(1990)}} dataset is a small subset of the Public Use Microdata Samples from 1990 US census. It consists of demographic information encoded as 68 categorical attributes of 2,458,285 individuals. 

\textit{Covertype}\footnote{\url{https://archive.ics.uci.edu/ml/datasets/covertype}} is comprised of cartographic descriptions and forest cover type of four wilderness areas in the Roosevelt National Forest of Northern Colorado in the US. It consists of 581,012 records, 54 cartographic variables and one class variable. Although \textit{Covertype} was originally made for classification tasks, it is often used for clustering tasks by removing the class variable~\cite{AckermannMRSLS12}.

The data set with the fewest number of dimensions is \textit{Tower}\footnote{\url{http://homepages.uni-paderborn.de/frahling/coremeans.html}}. This data set consists of 4,915,200 rows and 3 features as it is a 2,560 by 1,920 picture of a tower on a hill where each pixel is represented by a RGB color value. 

Inspired by~\cite{FGSSS13}, \textit{Caltech} was created by computing SIFT features from the images in the Caltech101\footnote{\url{http://www.vision.caltech.edu/Image_Datasets/Caltech101/}} image database. This database contains pictures of objects partitioned into 101 categories. Disregarding the categories, we concatenated the 128-dimensional SIFT vectors from each image into one large data matrix with 3,680,458 rows and 128 columns. 

\textit{NYTimes}\footnote{\url{https://archive.ics.uci.edu/ml/datasets/bag+of+words}} is a dataset composed of the bag-of-words (BOW) representations of 300,000 news articles from The New York Times. The vocabulary size of the text collection is 102,660. Due to the BOW encoding, \textit{NYTimes} has a very large number of dimensions and is highly sparse. To make processing feasible, we reduced the number of dimensions to 100 using terminal embeddings.

\subsubsection*{Preprocessing \& Experimental Setup}
To understand how denoising effects the quality of the outputted coresets, we applied Principal Component Analysis (PCA) on \textit{Caltech}, \textit{Census}, \textit{Covertype}, and \textit{NYTimes} by using the $k$ singular vectors corresponding to the largest singular values. 
We did not perform any preprocessing on \textit{Tower} due to its low dimensionality.

We followed the same experimental procedure with respect to the choice of parameter values for the algorithms as prior works~\cite{AckermannMRSLS12, FGSSS13}. For the target coreset size $T$, we experimented with $T=mk$ for $m = \{50, 100, 200, 500\}$. On \textit{Caltech}, \textit{Census},  \textit{Covertype}, and \textit{NYTimes}, we used values $k$ in $\{10, 20, 30, 40, 50\}$, while for \textit{Tower} we used larger cluster sizes $k \in \{20, 40, 60, 80, 100\}$. On the benchmark, we used  $k \in \{10, 20, 30, 40\}$.

We implemented Sensitivity Sampling, Group Sampling, Ray Maker, and StreamKM++ in C++. The source code can be found on GitHub\footnote{\url{https://github.com/sheikhomar/eval-k-means-coresets}}. For BICO, we used the authors' reference implementation\footnote{\url{https://ls2-www.cs.tu-dortmund.de/grav/en/bico}}. The source code was compiled with gcc 9.3.0. The experiments were performed on a machine with Intel Core i9 10940X 3.3GHz 14-Core and 2x DDR4 PC3200 128GB RAM.

\begin{figure*}
  \includegraphics[width=.65\linewidth]{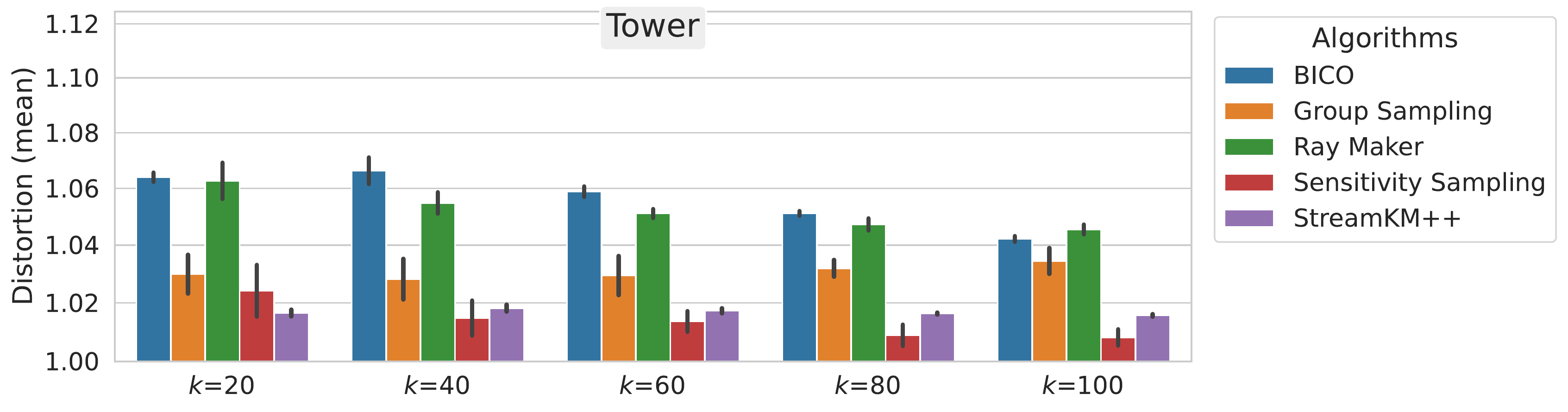}
  \newline \newline
  \subfloat{
    \includegraphics[width=.48\textwidth]{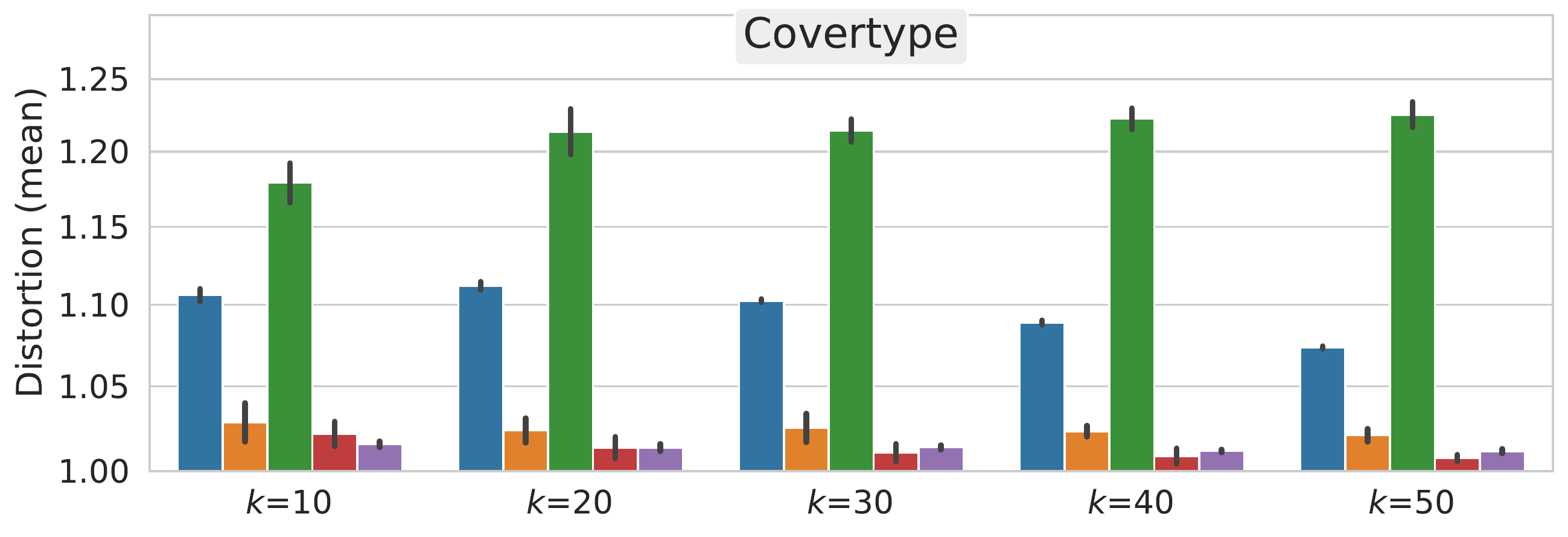}
  }
  \subfloat{
    \includegraphics[width=.48\linewidth]{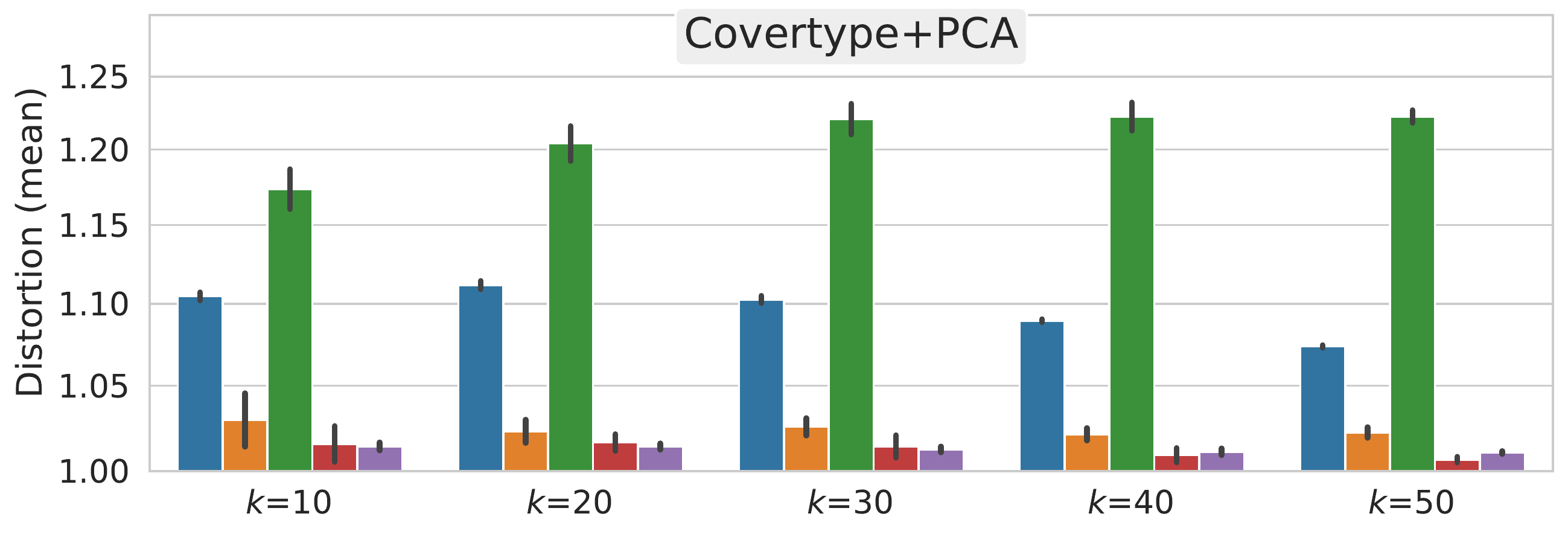}
  }
  \newline \newline
  \subfloat{
    \includegraphics[width=.48\textwidth]{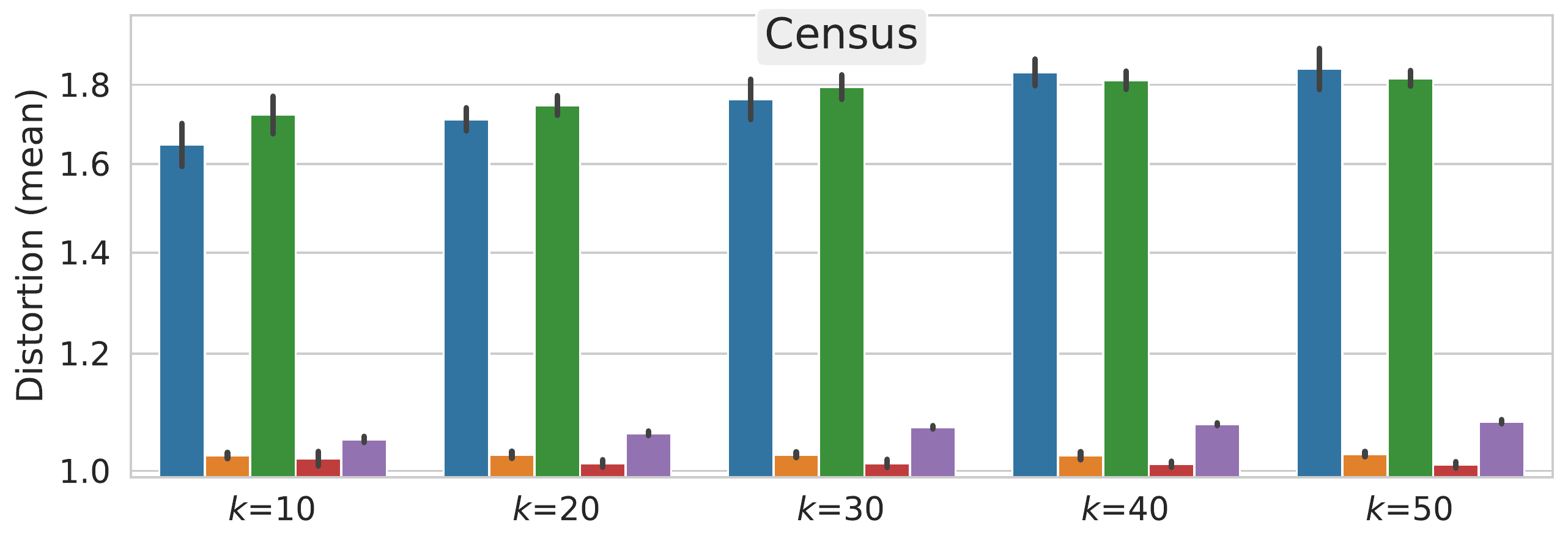}
  }
  \subfloat{
    \includegraphics[width=.48\linewidth]{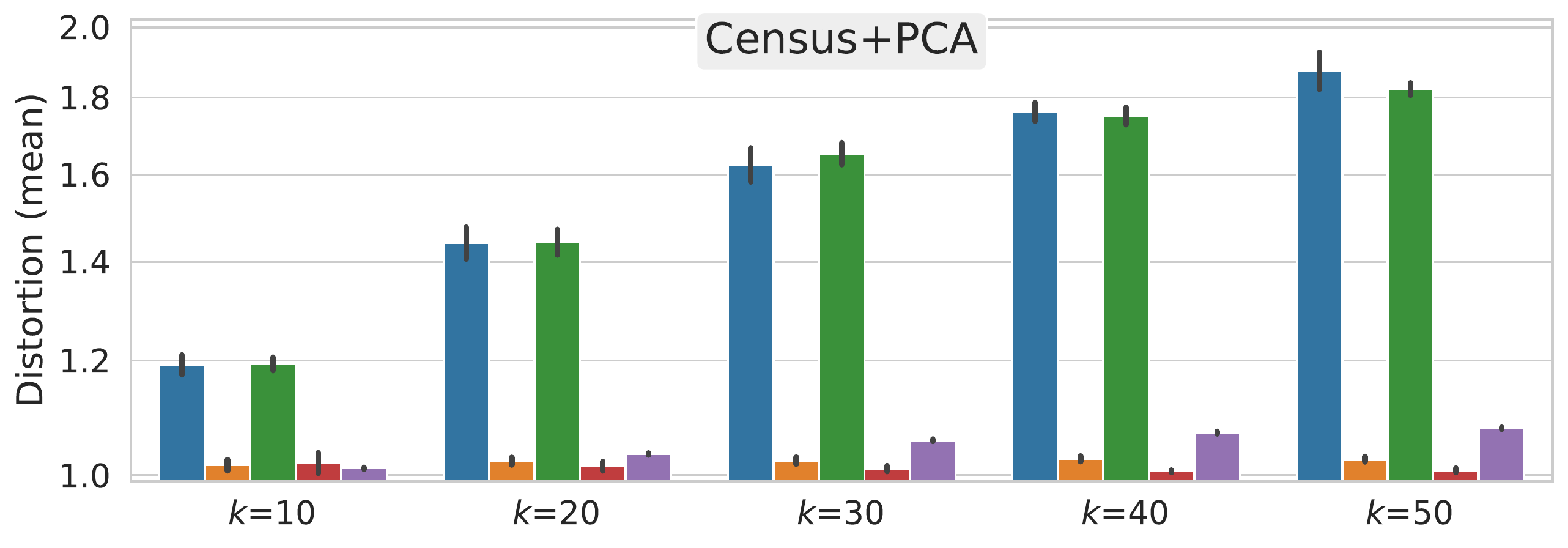}
  }
  \newline \newline
  \subfloat{
    \includegraphics[width=.48\textwidth]{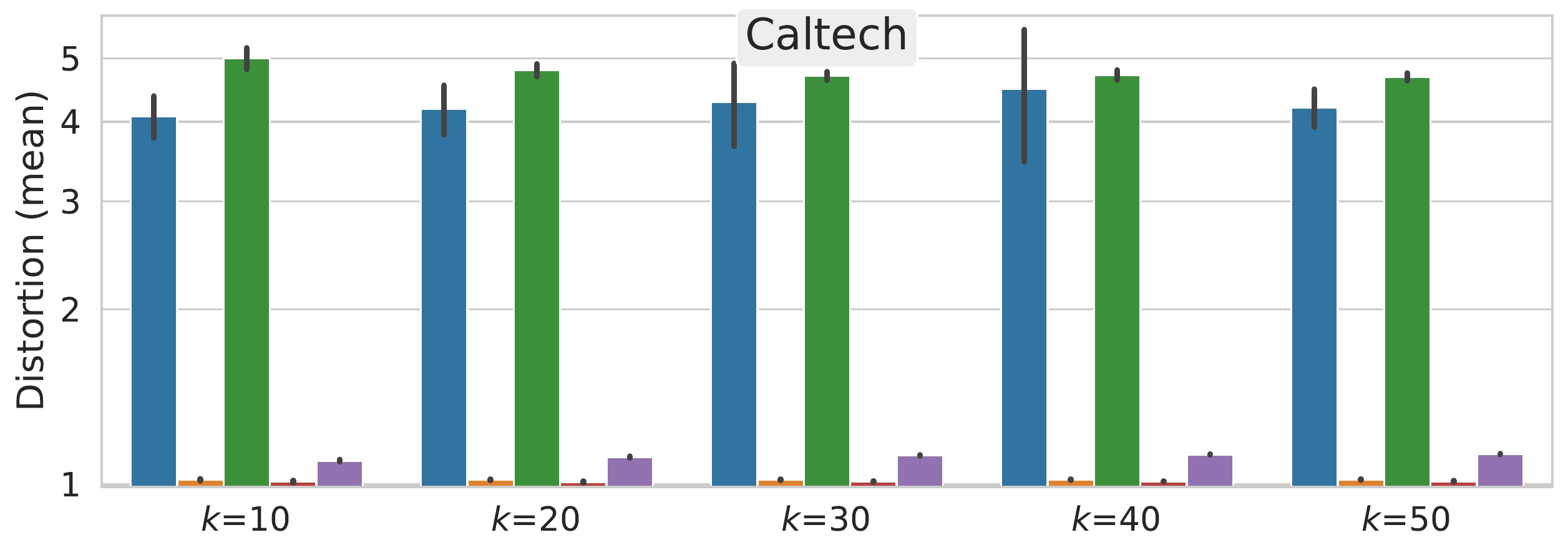}
  }
  \subfloat{
    \includegraphics[width=.48\linewidth]{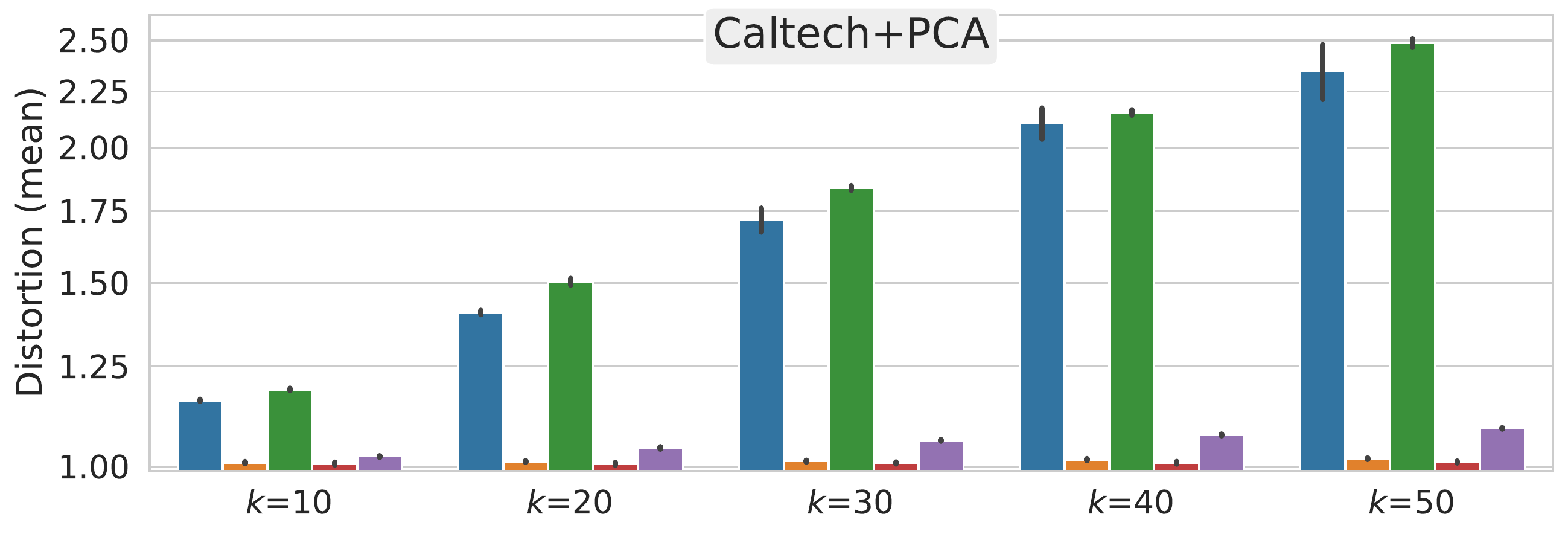}
  }
  \newline \newline
  \subfloat{
    \includegraphics[width=.48\textwidth]{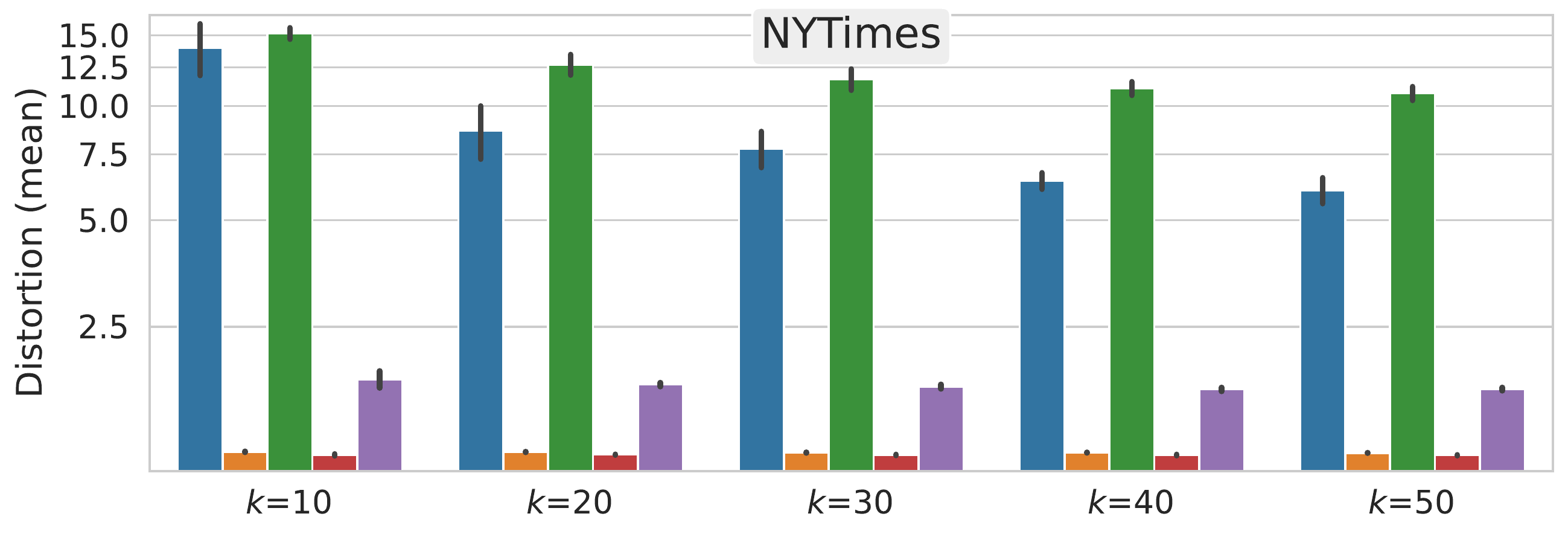}
  }
  \subfloat{
    \includegraphics[width=.48\linewidth]{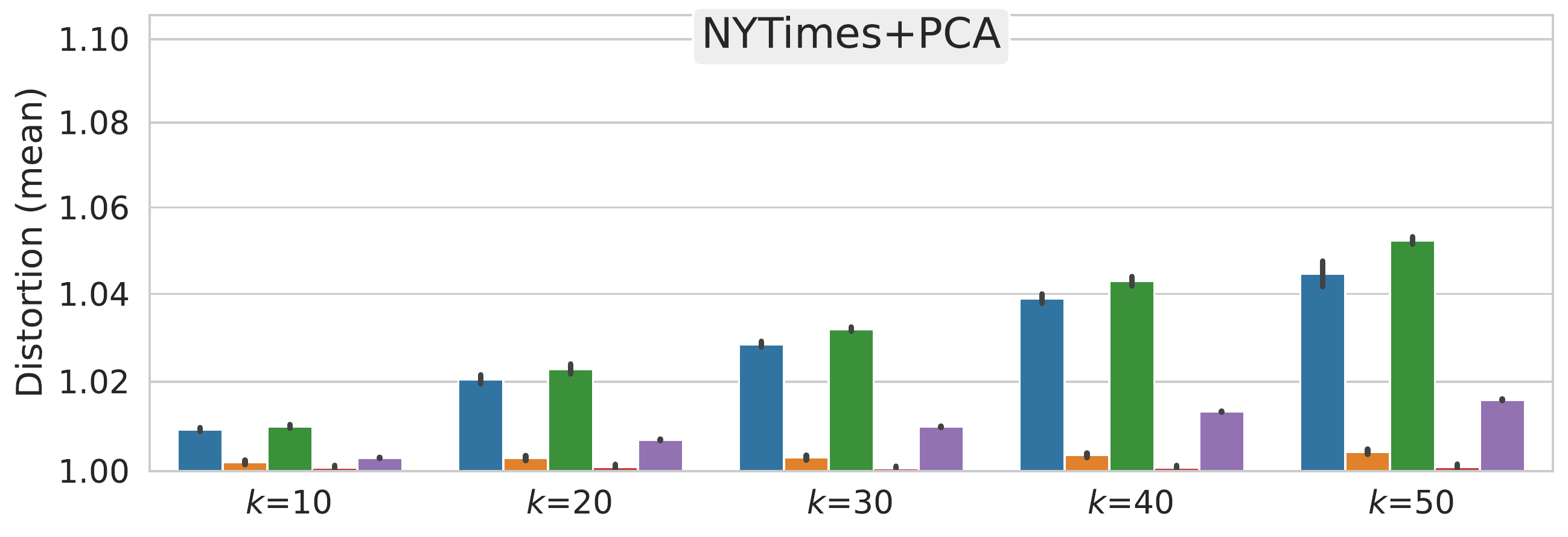}
  }
  \newline \newline
  \subfloat{
    \includegraphics[width=0.165\textwidth]{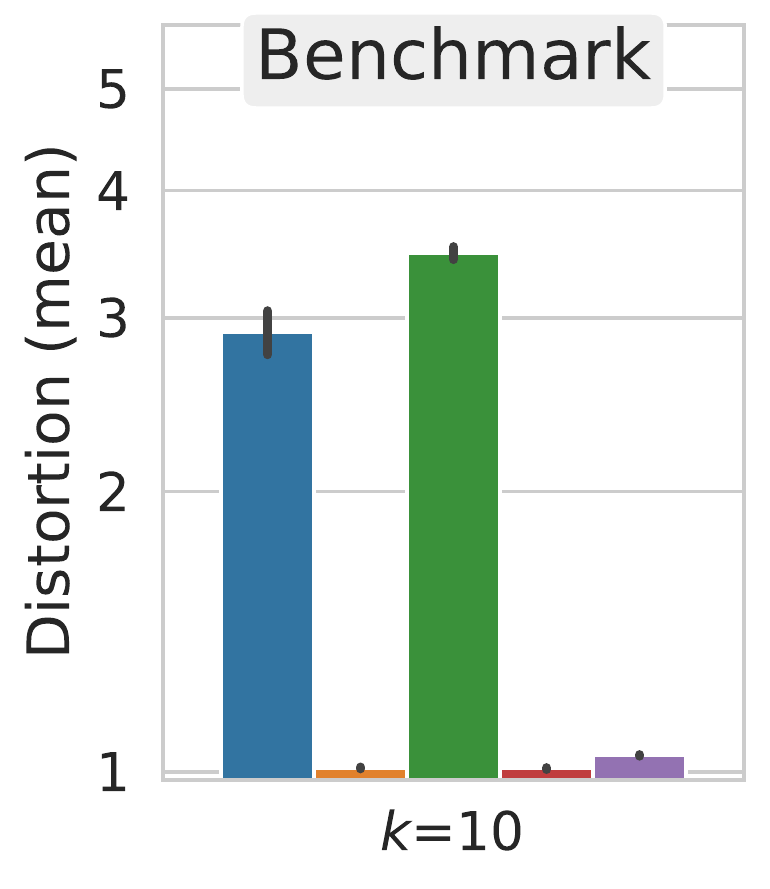}
    \includegraphics[width=0.165\textwidth]{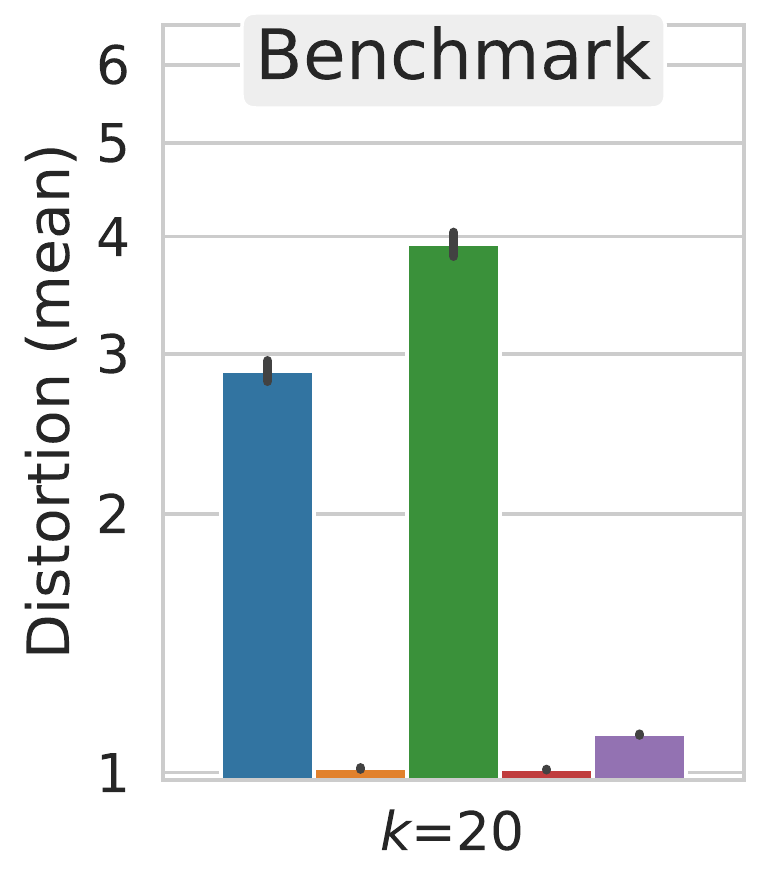}
    \includegraphics[width=0.165\textwidth]{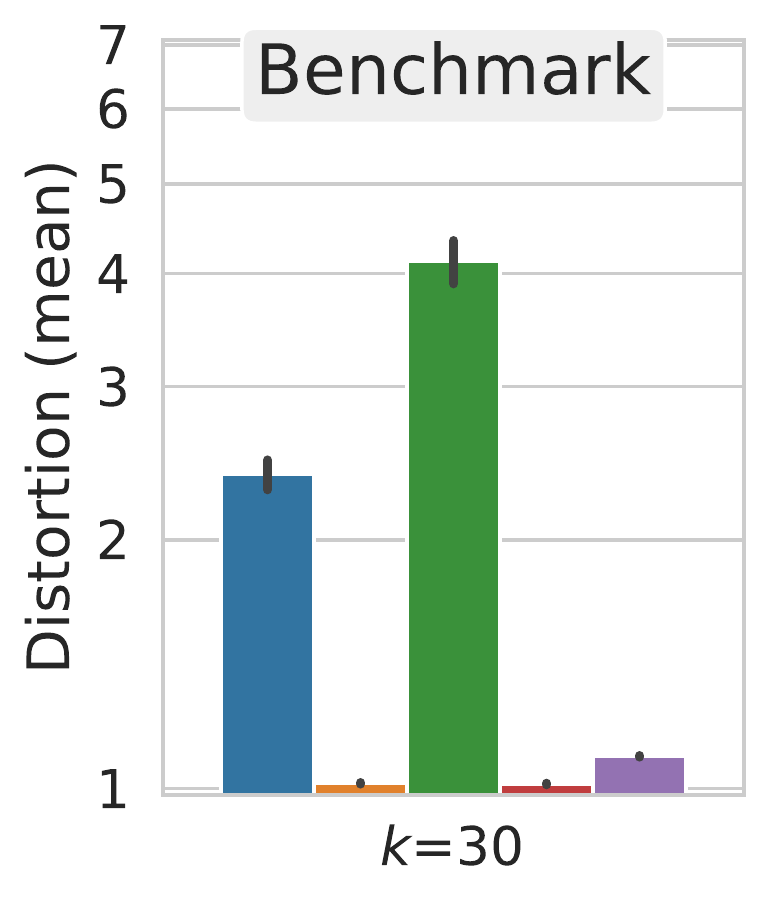}
    \includegraphics[width=0.331\textwidth]{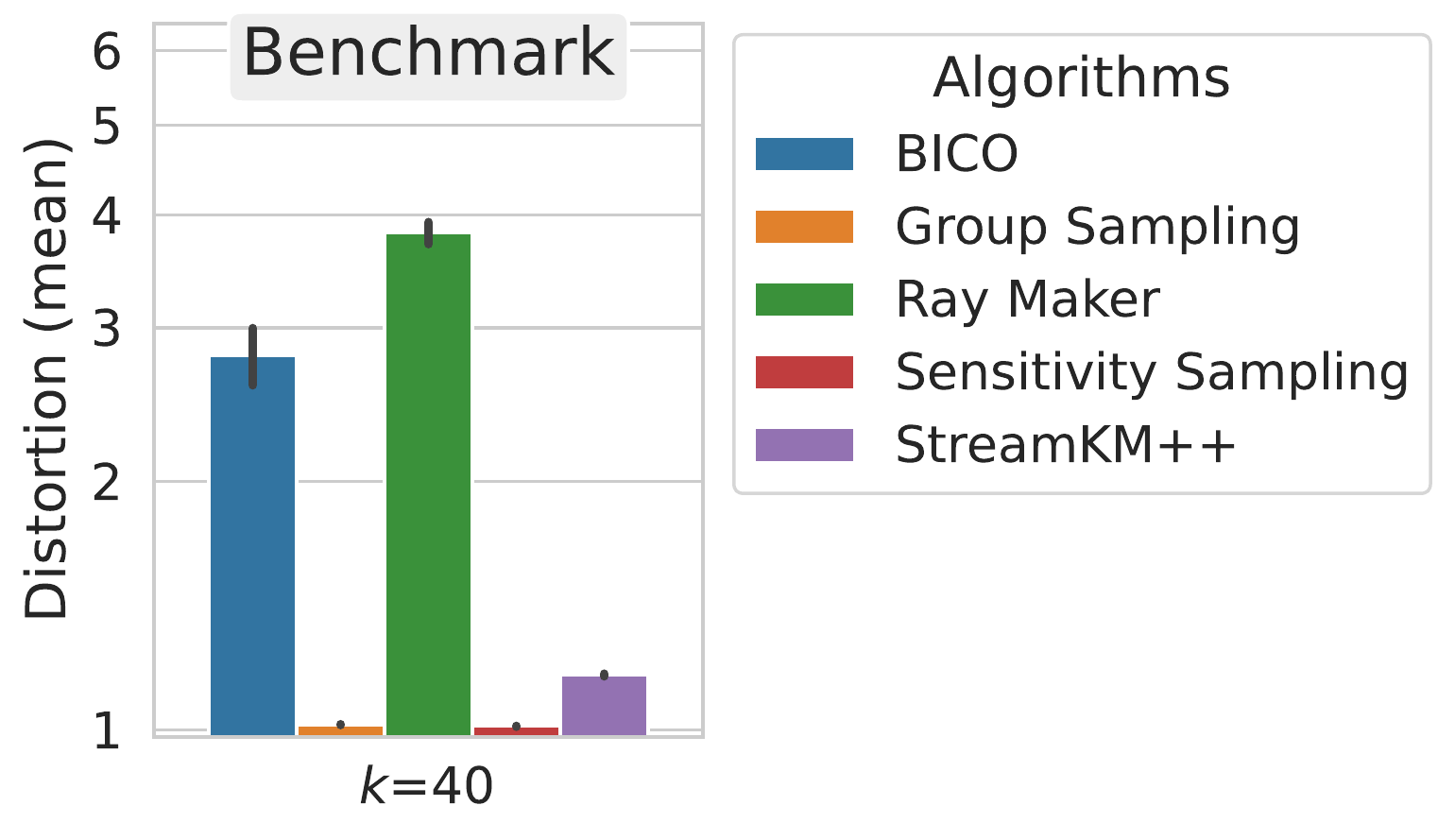}
  }
  \caption{The average distortions of the evaluated coreset algorithms with coreset size $T=200k$ on five real-world data sets and on four benchmark instances. Black bars indicate standard deviations. Notice that the axis is non-linear as otherwise the bars for Sensitivity Sampling and Group Sampling would disappear on the plots as their distortions are close to 1.}
  \label{fig:distortions}
\end{figure*}

\subsubsection*{Outcome of Experiments}
We observed that in the majority of our experiments, varying the coreset sizes does not significantly change the performance profiles of individual algorithms when comparing them against each other. Therefore, in the following sections, we focus on a cross-section of the experiments where $m=200$ i.e., coreset sizes $T=200k$.
For numerical results including variances of all the experiments and tables containing distortions, costs and running times, we refer to \cref{sec:distortions-tables}, \cref{sec:costs-tables} and \cref{sec:running-times-tables} in the appendix.

In~\cref{fig:distortions}, we summarized the distortions of the experiments with coreset sizes $T=200k$. All five algorithms are matched on the \textit{Tower} dataset. The worst distortions across the algorithms are close to 1, and performance between the algorithms is negligible. The performance difference between sampling-based and movement-based methods become more pronounced as the number of dimensions increase. On \textit{Covertype} with its 54 features, Ray Maker performs the worst followed by BICO and Group Sampling while Sensitivity Sampling and StreamKM++ perform the best. Differences in performance are more noticeable on \textit{Census}, \textit{Caltech}, and \textit{NYTimes}  where methods based on importance sampling perform much better. Sensitivity Sampling and Group Sampling perform the best, StreamKM++ come in second while BICO and Ray Maker perform the worst across these data sets.
On the \textit{Benchmark}, Ray Maker is the worst while Sensitivity Sampling and Group Sampling are the best. StreamKM++ performs also very well compared to BICO.

\subsubsection*{Interpretation of Experimental Results}

{\bf Optimization versus Compression:}
While all five algorithms are equally matched when optimizing on the candidate coresets, coreset quality performance differ significantly (see~\cref{fig:distortions}). 
For all data sets, the obtained costs differed insignificantly for all values of $k$, (see~\cref{sec:costs-tables} and \cref{fig:real-costs} in the appendix), irrespective of the coreset algorithm used, while distortions varied strongly, depending on the coreset algorithm.

Nevertheless, the cost drop with increasing values of $k$ is a predictor for the quality of certain coresets. It is not uncommon for the $k$-means cost of real-world data sets to drop significantly for larger values of $k$.
~\cref{fig:cost-curves-real-world-datasets} illustrates this behavior for several real-world data sets. The more the curve bends, the less of a difference there is between computing a coreset and a clustering with low cost. For data sets with an L-shaped cost curve, a coreset algorithm adding more centers to the coreset will seem to be performing well when evaluating it based on the outcome of the optimization.
\textit{Tower} is a good example of a data set where optimization is very close to compression. Its cost curve bends the most which indicates that adding more centers help reduce the cost. One of the strengths of the benchmark is that there is no way of reducing the cost without capturing the right subclusters within a benchmark instance. This means that the cost does not decrease markedly beyond a certain value of $k$ even if more centers are added, see~\cref{fig:cost-curves-benchmark} in the appendix. 

For BICO, Ray Maker, and StreamKM++, there is a correlation between the steepness of the cost curve for a data set and the distortion of the generated coreset. 
On data sets where the curve is less steep, we observed higher distortions. The effect is more pronounced for BICO and Ray Maker than for StreamKM++. Importance sampling approaches (Group Sampling and Sensitivity Sampling) seem to be free from this behavior as they consistently generate high quality coresets irrespective of the shape of cost curve.

\begin{figure}
  \centering
  \includegraphics[width=1\linewidth]{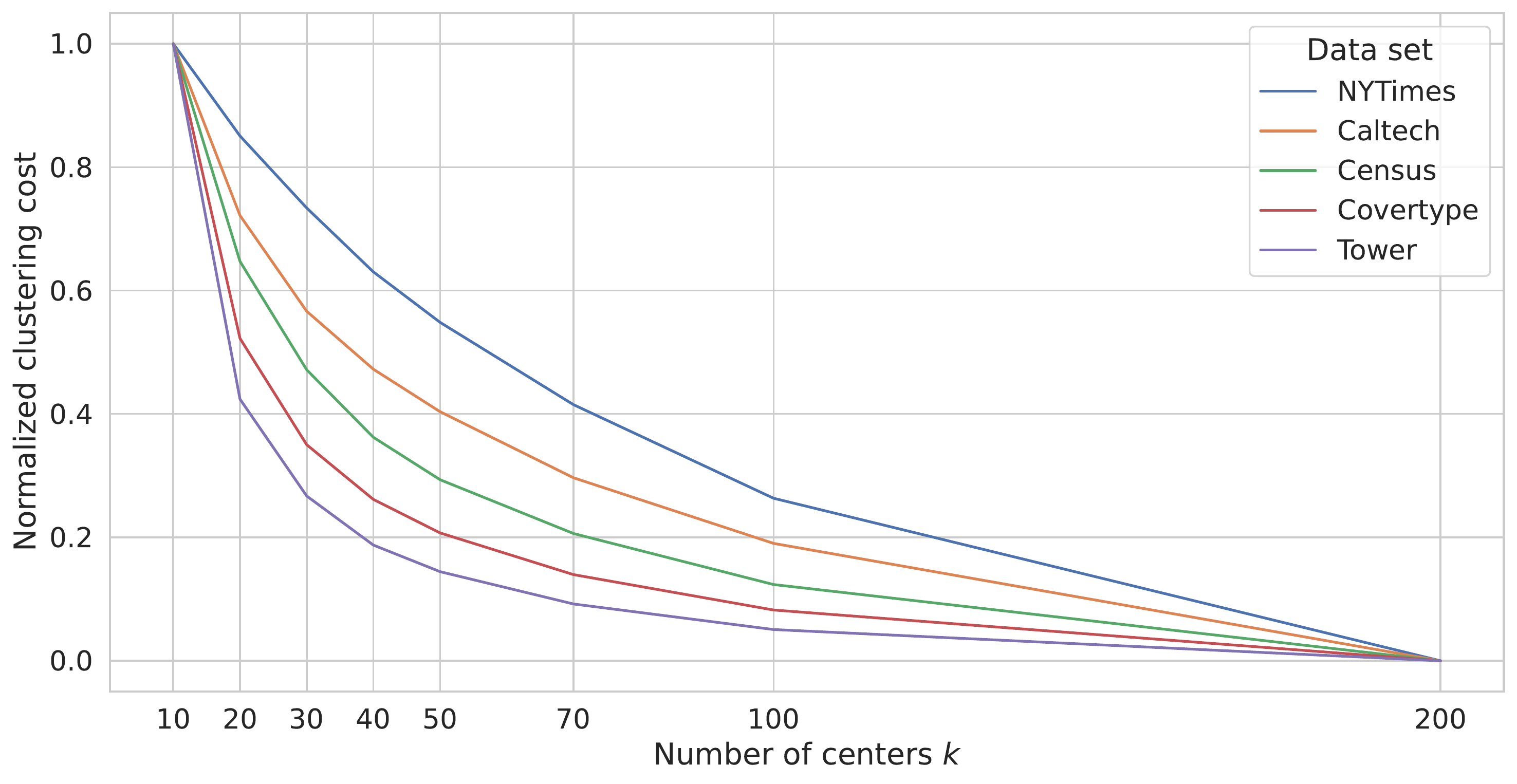}
  \caption{Depicts how clustering costs of five real-world data sets decrease as the number of centers increase. 
  Plotting the cost curve allows us to study whether we can observe a difference between coreset construction and optimization in a data set when evaluating a coreset based on cost.
  }
  \label{fig:cost-curves-real-world-datasets}
\end{figure}

{\bf Movement-based versus Sampling-based Approaches:}
In general, movement-based constructions perform the worst in terms of coreset quality. 
We observed that BICO and Ray Maker have the highest distortions across all data sets including on the benchmark instances. Among the sampling-based algorithms, Sensitive Sampling performs well with Group Sampling generally being competitive. This runs contrary to theory where Group Sampling has the better (currently known) theoretical bounds. StreamKM++ is an interesting case. Like the movement-based methods, its distortion increases with the dimension. Nevertheless, it generally performs significantly better than BICO and Ray Maker. This can be attributed to the fact that the coreset produced by StreamKM++ consists entirely of $k$-means++ centers weighted by the number of points of a minimal cost assignment. This is similar to movement-based algorithms such as BICO. Nevertheless, it also retains some of the performance from pure importance schemes.

In practice as well as in theory, the distortion of movement-based algorithms are affected by the dimension. By comparison, sampling-based algorithms are affected very little. Theoretically, there should not exist a difference, as the sampling bounds are independent of the dimension. What little effect can be observed is likely due to PCA making it easier to find low cost solutions that form the backbone of all coreset constructions. StreamKM++ is an interesting case, as it is still affected by the dimension, though less than the other movement based methods.

A notable exception is the benchmark. Here, sensitivity sampling generally found the lowest cost clustering, with BICO finding the second lowest cost clustering (see \cref{tab:real-cost-mean-std-benchmark} in the appendix). 
This happens \emph{despite} BICO generally having a worse distortion than for example Group Sampling or StreamKM++ (see \cref{tab:distortions-mean-std-benchmark} in the appendix).

{\bf Impact of PCA:}
On almost all our data sets, the performance improves when input data is preprocessed with PCA, especially for the movement-based algorithms. Empirically, the more noise is removed (i.e., small $k$ value), the lower the distortion. Notice that $k$ is the number of principal components that the input data is projected on to. The rest of the low variance components are treated as noise and removed. Method utilizing sampling (Group Sampling, Sensitivity Sampling and StreamKM++) are less effected by the preprocessing step. On \textit{Covertype}, PCA does not change the distortions by much because almost all the variance in the data is explained by the first five principal components (see~\cref{fig:explained-variance-pca}). 
On \textit{Caltech} and \textit{NYTimes}, the quality of the coresets by BICO and Ray Maker improves greatly because the noise removal is more aggressive. Even if the quality is much better for movement-based coreset constructions due to PCA, importance sampling methods are still superior when it comes to the quality of the compression. Summarizing, all methods benefit from PCA, and in case of movement-based constructions, we consider PCA a necessary preprocessing step. For the sampling-based methods, the computational expense of using PCA in preprocessing does not seem justify the comparatively meager gains in coreset distortion.

\section{Conclusion} \label{sec:conclusion}
In this work, we studied how to assess the quality of $k$-means coresets computed by state-of-the-art algorithms. 
Previous work generally measured the quality of optimization algorithms run on the coreset, which we empirically observed to be a poor indicator of coreset quality.
For real-world data sets, we sampled candidate clusterings and evaluated the worst case distortion on them. Complementing this, we also proposed a benchmark framework which generates hard instances for known $k$-means coreset algorithms. Our experiments indicate a general advantage for algorithms based on importance sampling over movement-based methods. Despite movement-based methods running on very efficient code, it is necessary to complement them with rather expensive dimension reduction methods, rendering what efficiency they might have over importance sampling somewhat moot.

Two results bear further investigation. First, the currently known provable coreset sizes for Sensitivity Sampling are worse than those provable via Group Sampling. Empirically, we observed the opposite: While Group Sampling is competitive, Sensitivity Sampling always outperforms it. Since Group Sampling requires somewhat cumbersome computational overhead, practical applications should prefer Sensitivity Sampling. In light of these results, a theoretical analysis for Sensitivity Sampling matching the performance of Group Sampling would be welcome.

The second point of interest focuses on the performance of StreamKM++. The distortion of this algorithm is significantly better than what one would expect from its theoretical analysis.
Empirically, StreamKM++ is notably better than the other movement-based constructions across all data sets, and especially on high dimensional data.
While it is not competitive to the pure importance sampling algorithms, there are several reasons for investigating it further. It essentially only requires running $k$-means++ for additional iterations, which is already a nearly ubiquitous algorithm for the $k$-means problem. Although the other sampling-based coreset algorithms can also be readily implemented, doing so might be cumbersome. In particular, the theoretically (but not empirically) best algorithm Group Sampling requires extensive preprocessing steps.
This begs the question whether there exist a better theoretical analysis for StreamKM++.

In addition, StreamKM++ currently weighs each point by the number of points assigned to it. It may also be possible to improve the performance of the algorithm in both theory and practice by using a different weighting scheme. 
We leave this as an open problem for future research.

\bibliography{main}
\newpage
\appendix
\section{Properties of the Benchmark}

\begin{fact}
For $a\neq a'$, we have $d(\mathcal{C}^{a},\mathcal{C}^{a'}) = 1-1/k$.
\end{fact}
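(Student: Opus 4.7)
The natural approach is to give a combinatorial description of the clusters $C_j^a$ in terms of a tuple representation of the row indices, and then to show that the $k\times k$ confusion matrix $M$ is completely uniform, so that every permutation attains the same value and the distance evaluates exactly.

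First, I would identify row indices $i\in[k^{\alpha}]$ with tuples $(i_1,\ldots,i_\alpha)\in[k]^{\alpha}$ via the lexicographic bijection consistent with the Kronecker product convention $(u\otimes v)_{(p,q)}=u_p v_q$. Unwinding the recursion $v_b^{\ell}=v_b^{\ell-1}\otimes\one_k$ gives $(v_b^{\ell})_{(m_1,\ldots,m_\ell)}=(v_b^1)_{m_1}$, i.e.\ $v_b^{\ell}$ depends only on its first tuple coordinate. Then, for the column $t=ak+b$ of $A$, the definition $\one_{k^{\alpha-a-1}}\otimes v_b^{a+1}$ (up to the apparent typo in the construction, so that the column has the correct length $k^\alpha$) shows that $A_{i,t}$ depends only on the coordinate $i_{\alpha-a}$ of the tuple, and equals $(v_b^1)_{i_{\alpha-a}}$. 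I would verify this on the $k=2,\alpha=3$ example in \cref{fig:benchmark-small-instances} as a sanity check.

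Next, since $(v_b^1)_{i_{\alpha-a}}>0$ exactly when $i_{\alpha-a}=b$, the cluster definition $A_i\in C_b^a\Leftrightarrow A_{i,ak+b}>0$ yields the clean description
\[
C_b^a=\{i\in[k^{\alpha}]: i_{\alpha-a}=b\}.
\]
For $a\neq a'$ the indices $\alpha-a$ and $\alpha-a'$ are two distinct tuple coordinates, so pinning both of them fixes two out of $\alpha$ coordinates while the remaining $\alpha-2$ are free. Hence
\[
M_{b,b'}=|C_b^a\cap C_{b'}^{a'}|=k^{\alpha-2}=\frac{n}{k^2}
\]
for every pair $(b,b')\in[k]\times[k]$.

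Finally, with $M$ constant equal to $n/k^2$, every permutation $\pi\in\Pi_k$ gives $\sum_{i=1}^k M_{i,\pi(i)}=k\cdot n/k^2=n/k$, so the maximum over $\Pi_k$ is also $n/k$ and
\[
d(\mathcal{C}^a,\mathcal{C}^{a'})=1-\frac{1}{n}\cdot\frac{n}{k}=1-\frac{1}{k}.
\]
The only real obstacle is the bookkeeping required to track which tuple coordinate each column of $A$ depends on through the nested Kronecker products; once the description $C_b^a=\{i:i_{\alpha-a}=b\}$ is established, the rest is an immediate counting argument.
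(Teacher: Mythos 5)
Your proof is correct and follows essentially the same route as the paper: both arguments reduce the claim to showing that the confusion matrix is uniformly $n/k^2$, so every permutation attains $n/k$ and the distance evaluates to $1-1/k$. Your base-$k$ tuple description $C_b^a=\{i: i_{\alpha-a}=b\}$ (together with the corrected reading $\one_{k^{\alpha-a-1}}\otimes v_b^{a+1}$ of the columns, fixing the exponent typo) is simply a more explicit formalization of the paper's argument about the overlap of the positive blocks of the vectors $v_i^{\ell}$.
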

\begin{proof}
Consider an arbitrary vector $v_i^{\ell}$. By construction, the positive entries of $v_i^{\ell}$ range from $k^{\ell-1}\cdot i+1$ to $k^{\ell-1}\cdot (i+1)$. Similarly, the positive entries for the vector $v_j^{\ell-1}$ range from range from $k^{\ell-2}\cdot j+1$ to $k^{\ell-2}\cdot (j+1)$. Therefore, concatenating $v_j^{\ell-1}$ $k$ times into a vector $v'$, $v'$ and $v_i^{\ell}$ can share at most one positive coordinate. Inductively, the same holds true for any concatenation of vectors $v_j^{\ell-h}$.
Thus, the two clusters induced by the columns formed by concatenating the vectors $v$ can share only a $1/k$ fraction of the points. Since each cluster consists of exactly $k^{\alpha}/k$ = $k^{\alpha-1}$ points, the confusion matrix $M$ only has entries $\frac{n}{k^2}$ and for any permutation $\pi$, we have $d(\mathcal{C}^{a},\mathcal{C}^{a'}) = 1-1/k$.
\end{proof}

\begin{fact}
\label{fact:cost}
For any $C_j^a$, we have $\cost_{C_j^a}(\{\mu(C_j^a)\}) = (\alpha-1)\cdot k^{\alpha-2}\cdot (k-1)$.
\end{fact}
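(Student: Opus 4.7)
The plan is to exploit the Kronecker product structure of the columns to separate the coordinates into those that are constant on $C_j^a$ and those that are ``balanced'' on $C_j^a$, then add up the contribution of each coordinate to $\sum_{p\in C_j^a}\|p-\mu(C_j^a)\|_2^2$.

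First I would index the rows $0,1,\ldots,k^\alpha-1$ by their base-$k$ digit expansion $i=i_0+i_1k+\cdots+i_{\alpha-1}k^{\alpha-1}$ with $i_\ell\in\{0,\ldots,k-1\}$. Unwinding the recursive definition of $v_b^{a+1}$ and the Kronecker product $\one_{k^{\alpha-a-1}}\otimes v_b^{a+1}$, one sees that the entry $A_{i,t}$ in column $t=a'k+b'$ depends only on the single digit $i_{a'}$, and is equal to $\frac{k-1}{k}$ when $i_{a'}=b'-1$ and to $-\frac{1}{k}$ otherwise. Consequently, $A_i\in C_j^a$ (defined by $A_{i,ak+j}>0$) is equivalent to $i_a=j-1$, so $|C_j^a|=k^{\alpha-1}$.

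Next I would compute the mean $\mu(C_j^a)$ coordinate by coordinate. For columns $t=ak+b'$ in the block of clustering $\mathcal{C}^a$, the value of $A_{i,t}$ depends only on $i_a$, which is fixed on $C_j^a$; hence the mean simply equals that same constant value ($\frac{k-1}{k}$ if $b'=j$ and $-\frac{1}{k}$ otherwise). For columns $t=a'k+b'$ with $a'\neq a$, the digit $i_{a'}$ ranges uniformly over $\{0,\ldots,k-1\}$ as $i$ varies over $C_j^a$, so the entries $\frac{k-1}{k}$ and $-\frac{1}{k}$ appear in the proportions $1/k$ and $(k-1)/k$, and a direct check gives mean $0$ in that coordinate.

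With the mean in hand, the squared distance $\|A_i-\mu(C_j^a)\|_2^2$ contributes $0$ from every column in block $a$. For every block $a'\neq a$, exactly one of the $k$ columns in that block has $b'=i_{a'}+1$ and contributes $\bigl(\tfrac{k-1}{k}\bigr)^2$, while each of the remaining $k-1$ columns contributes $\bigl(\tfrac{1}{k}\bigr)^2$, for a per-block total of $\frac{(k-1)^2+(k-1)}{k^2}=\frac{k-1}{k}$. Summing over the $\alpha-1$ blocks $a'\neq a$, every point $p\in C_j^a$ satisfies $\|p-\mu(C_j^a)\|_2^2=(\alpha-1)\cdot\frac{k-1}{k}$, and summing over the $k^{\alpha-1}$ points in $C_j^a$ yields $(\alpha-1)\cdot k^{\alpha-2}\cdot(k-1)$, as claimed.

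The only real obstacle is the notational bookkeeping in the first step: showing that after the iterated Kronecker products the value of column $t=a'k+b'$ at row $i$ depends only on the digit $i_{a'}$. Once this ``digit picture'' is established, the rest of the argument is a short calculation since the coordinates separate cleanly into one block that is constant on $C_j^a$ and $\alpha-1$ blocks that are uniformly balanced on $C_j^a$.
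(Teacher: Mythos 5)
Your proof is correct and takes essentially the same route as the paper: compute the cluster mean explicitly (equal to the common coordinates on the block of columns belonging to $\mathcal{C}^a$ and zero on all other blocks), then sum the squared deviations coordinate-/block-wise over the $k^{\alpha-1}$ points; your base-$k$ digit bookkeeping just makes explicit what the paper handles ``by construction'' and by a without-loss-of-generality reduction to $C_1^0$. (Your per-block tally $\left(\frac{k-1}{k}\right)^2+(k-1)\left(\frac{1}{k}\right)^2=\frac{k-1}{k}$ is in fact stated more carefully than the paper's displayed intermediate expression, which omits the factor $k-1$, though both yield the same final value.)
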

\begin{proof}
Without loss of generality, we consider $C_1^0$; the proof is analogous for the other choices of $j$ and $a$. We first note that for any point $A_i \in C_1^0$, the coordinates $A_{i,\ell}$ are identical for $\ell <k$. Furthermore for the column $\ell\geq k$, we have by construction $\sum_{A_i\in C_j} A_{i,\ell} = k^{\alpha-1}\cdot \frac{k-1}{k} + (k^{\alpha}-k^{\alpha-1})\frac{1}{k}=k^{\alpha-1}\cdot (\frac{k-1}{k} - (k-1)\frac{1}{k}) = 0.$ Therefore, the mean of $C_1^0$ satisfies $\mu(C_1^0)_{\ell} = \begin{cases}A_{i,\ell} &\text{if }\ell<k \\
0 &\text{else.}\end{cases}$. 
Thus, the cost is precisely $(\alpha-1)\cdot k^{\alpha-1}\cdot \left(\left(\frac{k-1}{k}\right)^2 + \left(\frac{1}{k}\right)^2 \right)=(\alpha-1)\cdot k^{\alpha-2}\cdot (k-1)$.
\end{proof}

Finally, we show that the means for the clustering $\mathcal{C}^{a}$ also induce $\mathcal{C}^{a}$.

\begin{fact}
\label{fact:opt}
For a clustering $\mathcal{C}^{a}$, let $\mu(C_j^{a})$ denote the mean of cluster $C_j^a$. Then every point  is assigned to its closest center. Moreover, every point $A_i$ of $C_j^a$ has equal distance to any center $\mu(C_h^{a})$ with $h\neq j$.
\end{fact}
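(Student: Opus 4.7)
The plan is to exploit the explicit form of the means $\mu(C_j^{a})$ that is already implicit in the proof of Fact \ref{fact:cost}, and then split any squared distance $\|A_i - \mu(C_h^{a})\|^2$ into a block-$a$ contribution and a contribution from the remaining $(\alpha-1)k$ coordinates. The key observation is that the first contribution is completely controlled by the simple vectors $v_j^1$, while the second is independent of $h$, which gives both claims almost for free.

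First I would generalize Fact \ref{fact:cost} to pin down $\mu(C_j^{a})$ exactly: by the same averaging argument (in each non-$a$ block every column sums to $0$ over $C_j^{a}$, while in block $a$ all points of $C_j^{a}$ share the same entries, namely the coordinates of $v_j^1$), the mean satisfies $\mu(C_j^{a})_{\ell} = (v_j^1)_{\ell - a k}$ for $ak < \ell \leq (a{+}1)k$ and $\mu(C_j^{a})_{\ell} = 0$ otherwise. In particular, the means $\mu(C_1^{a}), \ldots, \mu(C_k^{a})$ agree (being zero) outside of block $a$.

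Next, fix $A_i \in C_j^{a}$ and any $h$. Splitting the squared distance by blocks gives
\begin{equation*}
\|A_i - \mu(C_h^{a})\|^2 \;=\; \|A_i^{(a)} - v_h^1\|^2 \;+\; \sum_{b \neq a} \|A_i^{(b)}\|^2,
\end{equation*}
where $A_i^{(b)}$ denotes the restriction of $A_i$ to block $b$. The sum over $b\neq a$ does not depend on $h$, so the whole comparison reduces to the block-$a$ term. Since $A_i \in C_j^{a}$ means $A_i^{(a)} = v_j^1$, we have $\|A_i^{(a)} - v_h^1\|^2 = \|v_j^1 - v_h^1\|^2$, and a direct calculation shows that this equals $0$ when $h=j$ and exactly $2$ for every $h\neq j$ (the difference $v_j^1 - v_h^1$ has entries $+1$ at position $j$, $-1$ at position $h$, and $0$ elsewhere).

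The two conclusions follow immediately: plugging $h=j$ gives the smallest possible value of $\|A_i - \mu(C_h^{a})\|^2$, so $A_i$ is indeed assigned to $\mu(C_j^{a})$ as its closest center, while for $h\neq j$ the value is the constant $2 + \sum_{b\neq a}\|A_i^{(b)}\|^2$, which is independent of $h$ as claimed. I do not anticipate a real obstacle here: the only thing that requires care is checking that $\sum_{A_i\in C_j^{a}} A_{i,\ell}=0$ for every column $\ell$ outside block $a$, which is exactly the balanced-sign property underlying Fact \ref{fact:cost} and follows by an easy induction on the Kronecker construction of the columns.
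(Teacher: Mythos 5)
Your proposal is correct and follows essentially the same route as the paper: both identify the means as being supported only on block $a$ (equal there to the shared block coordinates $v_j^1$, zero elsewhere), reduce the distance comparison to the block-$a$ coordinates, and observe that the point coincides with its own cluster's mean there while being symmetric with respect to all other means. Your version just makes the block decomposition and the constant $\|v_j^1-v_h^1\|^2=2$ explicit, which the paper leaves implicit.
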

\begin{proof}
Again, we assume without loss of generality $a=0$.
Let $A_i$ be an arbitrary point of cluster $C_{h}^{a}$ and consider the mean $\mu(C_j^a)_{\ell} = \begin{cases}A_{i,\ell} &\text{if }\ell<k \\
0 &\text{else.}\end{cases}$ of cluster $C_j^a$. By definition, the positive coordinates of $A_i$ are not equal to the positive coordinates of $\mu(C_j^a)$. The only difference in coordinates between the means of $\mu(C_j^a)$ and $\mu(C_h^a)$ are the first $k$ coordinates, as the rest are $0$.
But here the coordinates of $\mu(C_h^a)$ and $A_i$ are identical, hence $\mu(C_j^a)$ cannot be closer to $A_i$.

To prove that the distances between $A_i$ and any $\mu(C_h^{a})$ with $h\neq j$ are equal, again consider that any difference can only exist among the first $k$ coordinates. Here, we have $\mu(C_h^{a})_h = \frac{k-1}{k}$, and the remaining columns are $-\frac{1}{k}$. Since $A_{i,h} = -\frac{1}{k}$ for any $h\neq j$, the claim follows.
\end{proof}

\section{Hardness of Weak Coreset Evaluation}

Here, we show that it is in general co-NP hard to evaluate whether two point sets $A$ and $B$ are weak coresets of each other. A weak coreset only requires that a $(1+\varepsilon)$ approximation for one point set is a $(1+O(\varepsilon))$ for the other.

\begin{proposition}
\label{prop:hardness}
Given two point sets $A$ and $B$ in $\mathbb{R}^d$ and a sufficiently small (constant) $\varepsilon>0$, it is co-NP hard to decide whether $A$ is a weak coreset of $B$.
\end{proposition}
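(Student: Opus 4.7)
My plan is to reduce from a known gap NP-hardness of Euclidean $k$-means: for some absolute constant $\gamma>0$, given a point set $P\subset\R^d$ and a threshold $\tau$, distinguishing $\opt_k(P)\leq\tau$ from $\opt_k(P)\geq(1+\gamma)\tau$ is NP-hard (see, e.g., Awasthi, Charikar and Krishnaswamy, FOCS 2015). I will show that a polynomial-time algorithm deciding whether one point set is a weak $\eps$-coreset of another, for any fixed $\eps$ sufficiently smaller than $\gamma$, can be turned into a decision procedure for this gap problem, which gives co-NP-hardness of the weak coreset problem.

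Given a gap instance $(P,k,\tau)$, I construct two point sets $A$ and $B$ in a common ambient space $\R^{d+O(1)}$. The set $B$ is a rigid calibration instance: I place $k$ tight point clouds at pairwise far-apart anchor locations, with intra-cluster spread tuned so that the sum of within-cluster variances equals $\tau$ exactly. By making the inter-anchor distances very large compared to $\tau$, any $(1+\eps)$-approximate $k$-clustering of $B$ must place exactly one center inside each anchor region, pinning $\opt_k(B)=\tau$ and restricting the near-optimal clusterings to a small explicit family. The set $A$ is built from the same anchor skeleton but with one of the $k$ anchor clouds replaced by a translated and rescaled copy of $P$, arranged so that $\opt_k(A) = \opt_k(P) + o(\tau)$ and so that the near-optimal $k$-clusterings of $A$ still split into one center per anchor region plus the ``$P$ region.''

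In the YES case, $\opt_k(A)\leq \tau + o(\tau) = (1+o(1))\opt_k(B)$ and the near-optimal clusterings of $A$ and $B$ agree on the unchanged anchors, so $A$ is a weak $\eps$-coreset of $B$. In the NO case, $\opt_k(A)\geq (1+\gamma)\tau - o(\tau) > (1+\eps)\opt_k(B)$; taking $C^*_A$ to be an optimal $k$-clustering of $A$, this immediately violates the weak-coreset inequality in \cref{eq:coreset} applied at (nearly) optimal solutions, because the ratio $\cost_A(C^*_A)/\cost_B(C^*_A)$ cannot shrink below $\opt_k(A)/\opt_k(B) > 1+\eps$. Hence a polynomial-time weak-coreset tester decides the gap instance and completes the reduction.

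The main obstacle will be engineering the anchor gadget so that (i) rigidity of $B$ is preserved, (ii) the identity $\opt_k(A) = \opt_k(P) + o(\tau)$ holds tightly in both directions, and (iii) no spurious near-optimal clusterings of $A$ arise that could accidentally match near-optimal clusterings of $B$ and mask the gap. This will likely require replicating each anchor point sufficiently many times so that anchor regions dominate any alternative grouping, and choosing the rescaling factor between $P$ and the anchor spread so that the $(1+\gamma)$ gap comfortably survives the $(1+O(\eps))$ slack inherent in the weak-coreset definition.
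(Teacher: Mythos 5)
Your high-level strategy (reduce from gap-hardness of Euclidean $k$-means, which is also what the paper leans on via the APX-hardness results it cites) is reasonable, but the concrete reduction does not work. The center budget in your construction is inconsistent: $B$ consists of $k$ far-apart anchor clouds, so in $A$ the $k-1$ surviving anchors already force $k-1$ of the $k$ centers, leaving a \emph{single} center for the embedded rescaled copy of $P$. Consequently $\opt_k(A)$ is governed by the $1$-means cost of that copy plus the anchor costs, not by $\opt_k(P)$, and the claimed identity $\opt_k(A)=\opt_k(P)+o(\tau)$ is false. Since near-optimal clusterings of $A$ restricted to the $P$-region are just near-centroid placements, whether $(A,B)$ satisfies the weak-coreset property is completely insensitive to whether $\opt_k(P)\leq\tau$ or $\opt_k(P)\geq(1+\gamma)\tau$ — the YES and NO instances of the gap problem map to pairs with the same weak-coreset status, so the reduction decides nothing. (A repair would need, e.g., $2k-1$ centers with the anchors rigidly absorbing exactly $k-1$ of them, plus an argument that no near-optimal solution reallocates centers across regions.)

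Independently, your NO-case argument is broken even granting the construction. From $\cost_B(C^*_A)\geq\opt_k(B)$ you get $\cost_A(C^*_A)/\cost_B(C^*_A)\leq\opt_k(A)/\opt_k(B)$, i.e.\ an \emph{upper} bound on the ratio, not the lower bound you assert; the ratio can be close to $1$ even when the optima are far apart. Moreover, a discrepancy between $\opt_k(A)$ and $\opt_k(B)$ does not by itself contradict the weak-coreset guarantee, which (as used in this paper) only demands that a $(1+\eps)$-approximate solution for one point set remain $(1+O(\eps))$-approximate for the other — not that the two optima, or the cost ratio as in \cref{eq:coreset}, be close. To exhibit a violation you must produce a specific near-optimal solution of the relevant set whose cost on the other set is bad \emph{relative to that set's own optimum}; your plan conflates the strong (ratio) guarantee with the weak one. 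The paper's own argument sidesteps explicit gadgetry: it observes that certifying the weak-coreset property means certifying that \emph{no} clustering is cheap on one set yet expensive on the other, and deciding the existence of such a witness is NP-hard by the known APX-hardness of Euclidean $k$-means, so the certification problem is co-NP-hard. Your proposal would need both the budget fix and a corrected violation argument before it establishes \cref{prop:hardness}.
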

\begin{proof}
First, we recall that for some $\varepsilon_0$ and candidate clustering cost $V$, it is NP-hard to decide whether there exists a clustering $C$ with cost in $\cost_A(C)\leq V$ and $\cost_B(C)\geq (1+\varepsilon_0)\cdot V$.
Conversely, it is co-NP-hard to decide whether there exists no set of centers $C$ such that $\cost_A(C)\leq V$ and $\cost_B(C) \geq (1+\varepsilon_0)\cdot V$.
\end{proof}

We remark that the possible values for $\varepsilon_0$ are determined by the current APX-hardness results. Assuming NP$\neq$P, $\varepsilon_0\approx 1.07$ and assuming UCG, $\varepsilon_0 \approx 1.17$~\cite{Cohen-AddadSL21,Cohen-AddadS19} for $k$-means in Euclidean spaces.

\paragraph*{Further Extensions}

On the benchmark we considered here, both Sensitivity Sampling, as well as Group Sampling are similar to uniform sampling and, indeed, uniform sampling could be used to construct a good coreset. We can eliminate uniform sampling as a viable algorithm for this instance by combining multiple benchmarks $B_1,\ldots B_t$ with $\sum_{i=1}^t  k_i =k$. Each benchmark then has size $\sum_{i=1}^t  k_i^{\alpha}$. We then add an additive offset to the coordinates of each benchmark such that they do not interfere. In this case, uniform sampling does not work if the values of the $k_i$ are different enough. Since it is well known that uniform sampling is not a viable coreset algorithm in both theory and practice, we only used the basic benchmark for our evaluations.

\section{Distortions}
\label{sec:distortions-tables}

The tables 
(\cref{tab:distortions-mean-std-benchmark},
 \cref{tab:distortions-mean-std-caltech},
 \cref{tab:distortions-mean-std-caltech-pca},
 \cref{tab:distortions-mean-std-census},
 \cref{tab:distortions-mean-std-census-pca},
 \cref{tab:distortions-mean-std-covertype},
 \cref{tab:distortions-mean-std-covertype-pca},
 \cref{tab:distortions-mean-std-nytimes},
 \cref{tab:distortions-mean-std-nytimes-pca},
 \cref{tab:distortions-mean-std-tower})
show the distortions of the 5 evaluated algorithms on the different data sets.
We vary the coreset size $T$ for different $k$ values using the formula: $T=mk$ where $m = \{50, 100, 200, 500\}$.
The running time for StreamKM++ with coreset size $T=500k$ exceeds the allocated time budget of 32 hours on almost all data sets.
For this reason, the distortions for StreamKM++ with $m=500$ are not present in the tables.



\section{Generating Candidate Solutions}
\label{sec:candidate-solution-generation}
As mentioned in~\cref{sec:evaluation-procedure}, it is difficult to construct a candidate solution consisting of $k$ centers which yields a maximal distortion. We experimented with three methods for candidate solution generation; find $k$ centers via $k$-means++, randomly generate $k$ points inside the convex hull (Random CH), and randomly generate $k$ points inside the minimum enclosing ball (Random MEB). Our experiments show that the $k$-means++ method generates candidate solutions which consistently result in large distortions across all algorithms and data sets. The numerical results are presented in~\cref{tab:comparison-solution-generation-all}. Notice that the the numbers in~\cref{tab:comparison-solution-generation-all} are the maximum distortions across $k$ and $m$ because
we are only interesting in determining which of the three methods is superior. \cref{tab:comparison-solution-generation-caltech-bico} shows all the numbers for BICO on the \textit{Caltech} data set. For all the other combinations of algorithms and data sets, the takeaway message is virtually the same; $k$-means++ is the best method for generating candidate solutions.

\begin{longtable}{llrrr}
\toprule
 Data set  & Algorithm     &        $k$-means++&        Random CH&       Random MEB\\
\midrule
Caltech & BICO &   7.499068 &   4.800188 &   2.629218 \\
      & Group Sampling &   1.068441 &   1.035997 &   1.006828 \\
      & Ray Maker &   7.053856 &   5.442665 &   3.453930 \\
      & Sensitivity Sampling &   1.048977 &   1.016564 &   1.029646 \\
      & StreamKM++ &   1.202426 &   1.111698 &   1.012216 \\
\midrule
Caltech+PCA & BICO &   2.862242 &   2.411732 &   1.825558 \\
      & Group Sampling &   1.058707 &   1.021386 &   1.007735 \\
      & Ray Maker &   2.941248 &   2.563131 &   2.053860 \\
      & Sensitivity Sampling &   1.040106 &   1.012539 &   1.032113 \\
      & StreamKM++ &   1.139141 &   1.073384 &   1.013089 \\
\midrule
Census & BICO &   2.502564 &   1.441889 &   1.040356 \\
      & Group Sampling &   1.077525 &   1.032349 &   1.028025 \\
      & Ray Maker &   2.483973 &   1.633945 &   1.236878 \\
      & Sensitivity Sampling &   1.032978 &   1.088607 &   1.079925 \\
      & StreamKM++ &   1.127443 &   1.044656 &   1.003459 \\
\midrule
Census+PCA & BICO &   2.454087 &   1.472404 &   1.030620 \\
      & Group Sampling &   1.070332 &   1.036615 &   1.031606 \\
      & Ray Maker &   2.456844 &   1.617594 &   1.234714 \\
      & Sensitivity Sampling &   1.036155 &   1.086034 &   1.083994 \\
      & StreamKM++ &   1.126928 &   1.042801 &   1.003570 \\
\midrule
Covertype & BICO &   1.294620 &   1.146309 &   1.037080 \\
      & Group Sampling &   1.094779 &   1.041165 &   1.023978 \\
      & Ray Maker &   1.388141 &   1.227148 &   1.121157 \\
      & Sensitivity Sampling &   1.044575 &   1.115355 &   1.103073 \\
      & StreamKM++ &   1.048912 &   1.019965 &   1.004857 \\
\midrule
Covertype+PCA & BICO &   1.292637 &   1.147499 &   1.037594 \\
      & Group Sampling &   1.097391 &   1.038144 &   1.023664 \\
      & Ray Maker &   1.385070 &   1.229436 &   1.122238 \\
      & Sensitivity Sampling &   1.039479 &   1.097699 &   1.092900 \\
      & StreamKM++ &   1.049769 &   1.022796 &   1.006932 \\
\midrule
NYTimes & BICO &  35.319657 &  31.015943 &  28.932553 \\
      & Group Sampling &   1.089067 &   1.061821 &   1.033067 \\
      & Ray Maker &  28.373678 &  24.930933 &  19.520125 \\
      & Sensitivity Sampling &   1.050988 &   1.027549 &   1.012441 \\
      & StreamKM++ &   2.039047 &   1.861497 &   1.674643 \\
\midrule
Tower & BICO &   1.199658 &   1.040606 &   1.005894 \\
      & Group Sampling &   1.113305 &   1.043449 &   1.030855 \\
      & Ray Maker &   1.490978 &   1.286890 &   1.234467 \\
      & Sensitivity Sampling &   1.044464 &   1.103949 &   1.075281 \\
      & StreamKM++ &   1.051491 &   1.009577 &   1.000630 \\
\bottomrule
\caption{Distortions across experiments on different algorithms and data sets.}
\label{tab:comparison-solution-generation-all}
\end{longtable}

\begin{longtable}{rrlll}
\multicolumn{5}{c}{\textbf{Distortions of BICO on the \textit{Caltech} data set}} \\
\toprule
 $k$ &   $m$ &    $k$-means++ &    Random CH &   Random MEB \\
\midrule
10  &  50 & 5.12 (0.307) & 4.30 (0.270) & 2.63 (0.166) \\
    & 100 & 4.48 (0.284) & 3.91 (0.240) & 2.54 (0.125) \\
    & 200 & 4.08 (0.319) & 3.54 (0.236) & 2.46 (0.134) \\
    & 500 & 3.41 (0.215) & 3.06 (0.155) & 2.18 (0.080) \\
\midrule
20  &  50 & 6.35 (1.173) & 4.73 (0.632) & 2.63 (0.147) \\
    & 100 & 4.65 (0.283) & 3.82 (0.245) & 2.41 (0.104) \\
    & 200 & 4.19 (0.384) & 3.49 (0.271) & 2.26 (0.141) \\
    & 500 & 3.50 (0.404) & 3.07 (0.301) & 2.10 (0.128) \\
\midrule
30  &  50 & 6.01 (0.335) & 4.32 (0.204) & 2.57 (0.096) \\
    & 100 & 5.10 (0.628) & 3.89 (0.340) & 2.35 (0.125) \\
    & 200 & 4.29 (0.659) & 3.47 (0.440) & 2.21 (0.170) \\
    & 500 & 3.09 (0.138) & 2.69 (0.122) & 1.87 (0.064) \\
\midrule
40  &  50 & 6.24 (0.524) & 4.33 (0.228) & 2.44 (0.076) \\
    & 100 & 5.23 (0.874) & 3.86 (0.434) & 2.29 (0.191) \\
    & 200 & 4.50 (1.085) & 3.49 (0.595) & 2.16 (0.183) \\
    & 500 & 3.38 (0.398) & 2.85 (0.258) & 1.92 (0.111) \\
\midrule
50  &  50 & 7.50 (1.013) & 4.80 (0.449) & 2.47 (0.190) \\
    & 100 & 5.21 (0.968) & 3.76 (0.475) & 2.20 (0.149) \\
    & 200 & 4.21 (0.296) & 3.35 (0.195) & 2.10 (0.065) \\
    & 500 & 3.36 (0.429) & 2.81 (0.288) & 1.86 (0.105) \\
\bottomrule
\caption{The effect of different solution generation approaches on the distortions (see \cref{sec:candidate-solution-generation}).
The distortions are obtained by running BICO on the \textit{Caltech} data set.
}
\label{tab:comparison-solution-generation-caltech-bico}
\end{longtable}

\newpage
\section{Auxiliary Plots}

\begin{figure}[H]
  \includegraphics[width=0.9\linewidth]{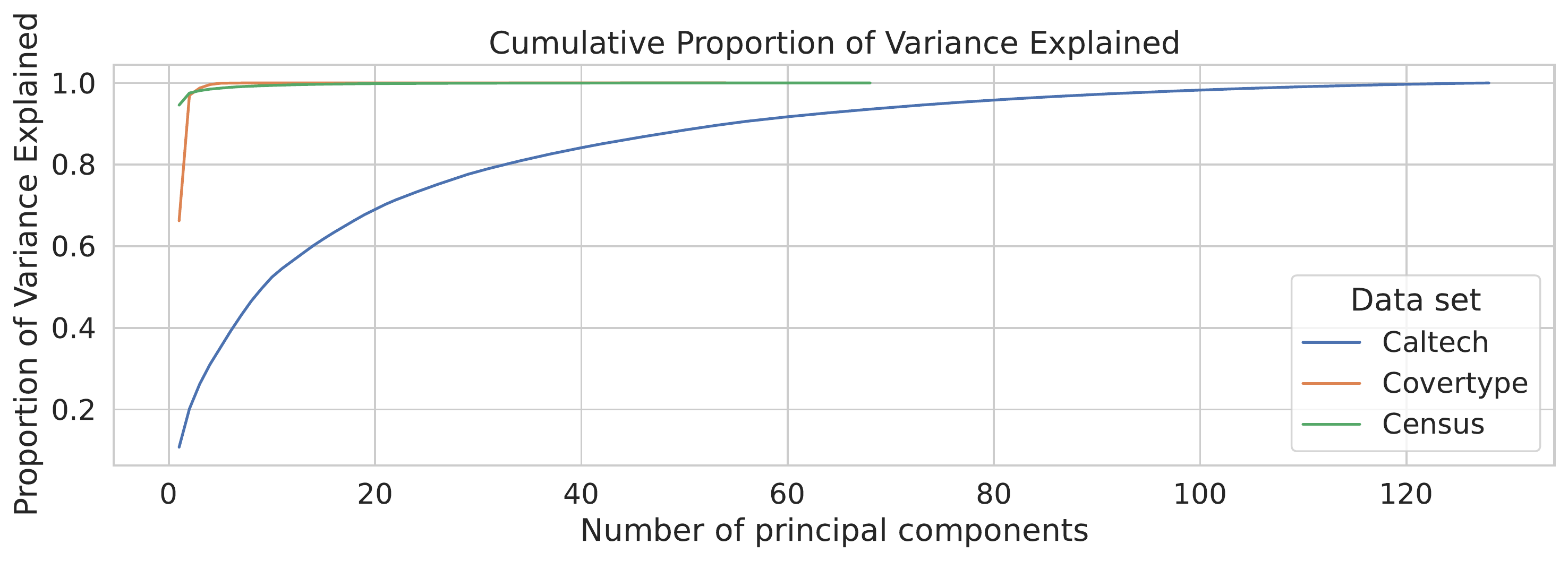}
  \caption{The cumulative proportion of explained variance by principal components on \textit{Caltech}, \textit{Covertype}, and \textit{Census}.}
  \label{fig:explained-variance-pca}
\end{figure}

\begin{figure}[H]
  \includegraphics[width=0.9\linewidth]{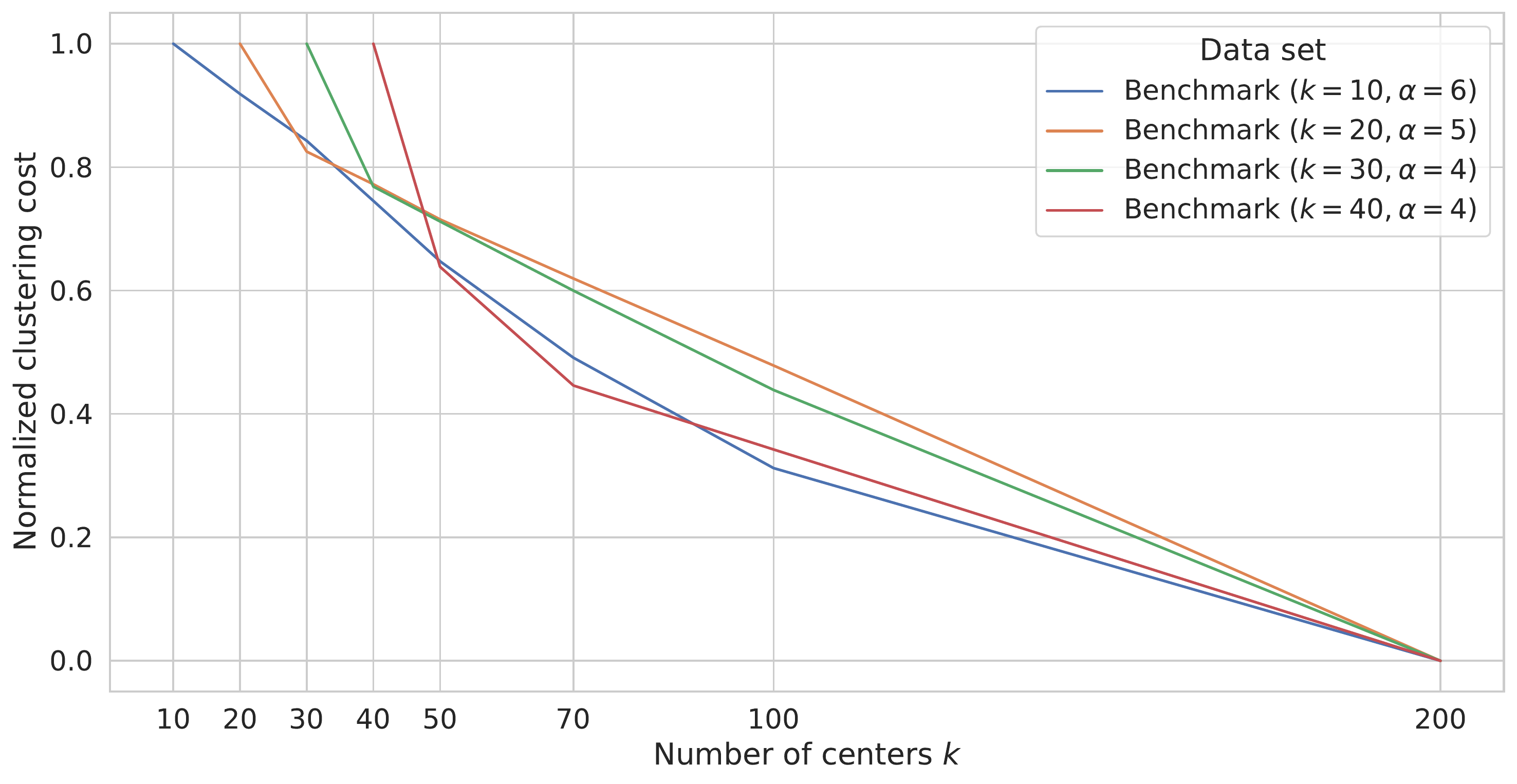}
  \caption{Shows the clustering costs of four instances of the benchmark framework as a function of the number of centers. In contrast to real-world data sets, the costs do not decrease rapidly as more cluster centers are added.
  }
  \label{fig:cost-curves-benchmark}
\end{figure}

\newpage

\begin{figure}[H]
 \includegraphics[width=.67\linewidth]{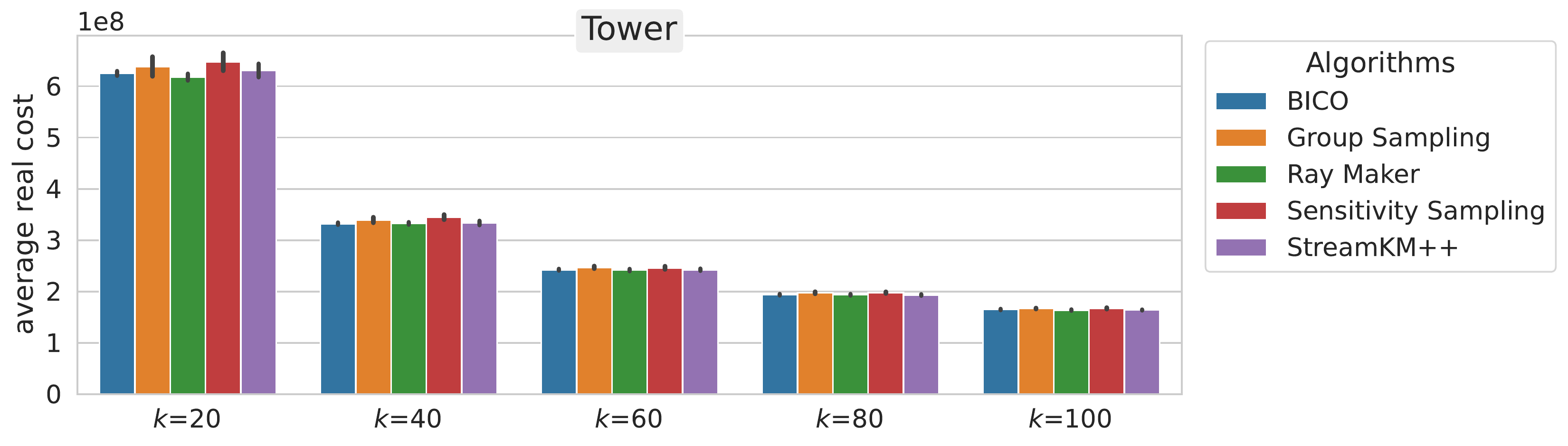}
 \newline
 \subfloat{
   \includegraphics[width=.47\textwidth]{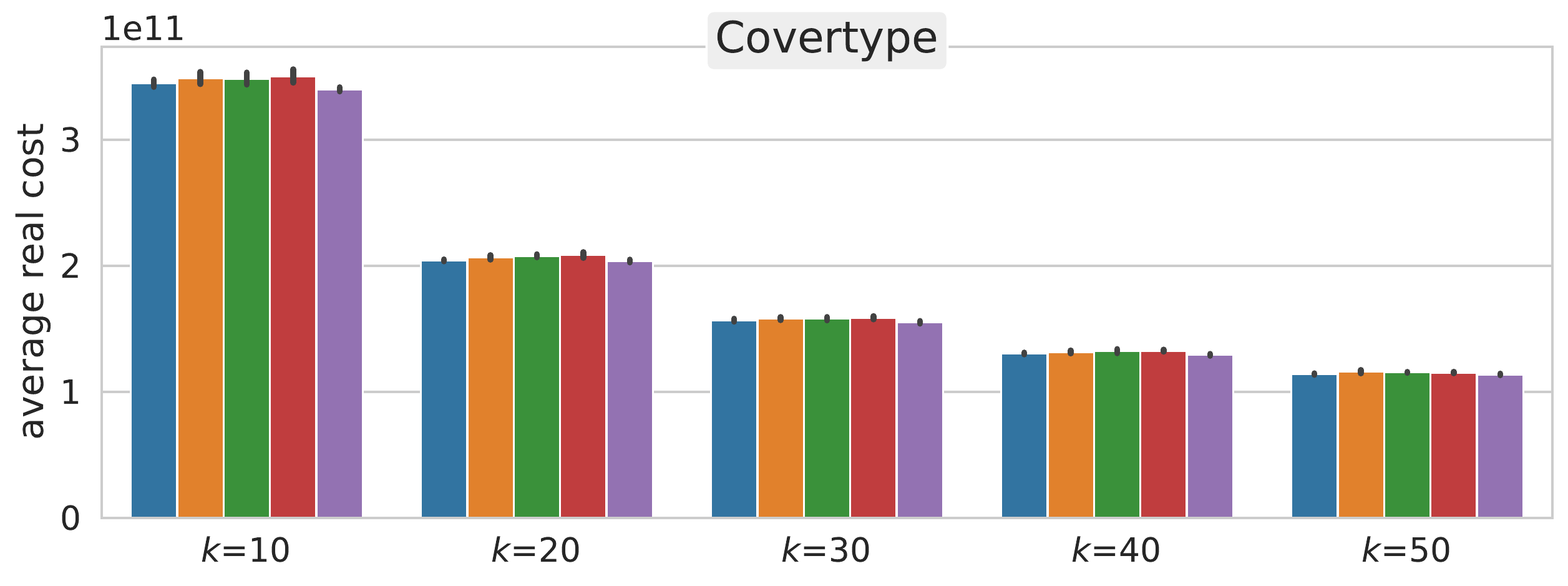}
 }
 \subfloat{
   \includegraphics[width=.47\linewidth]{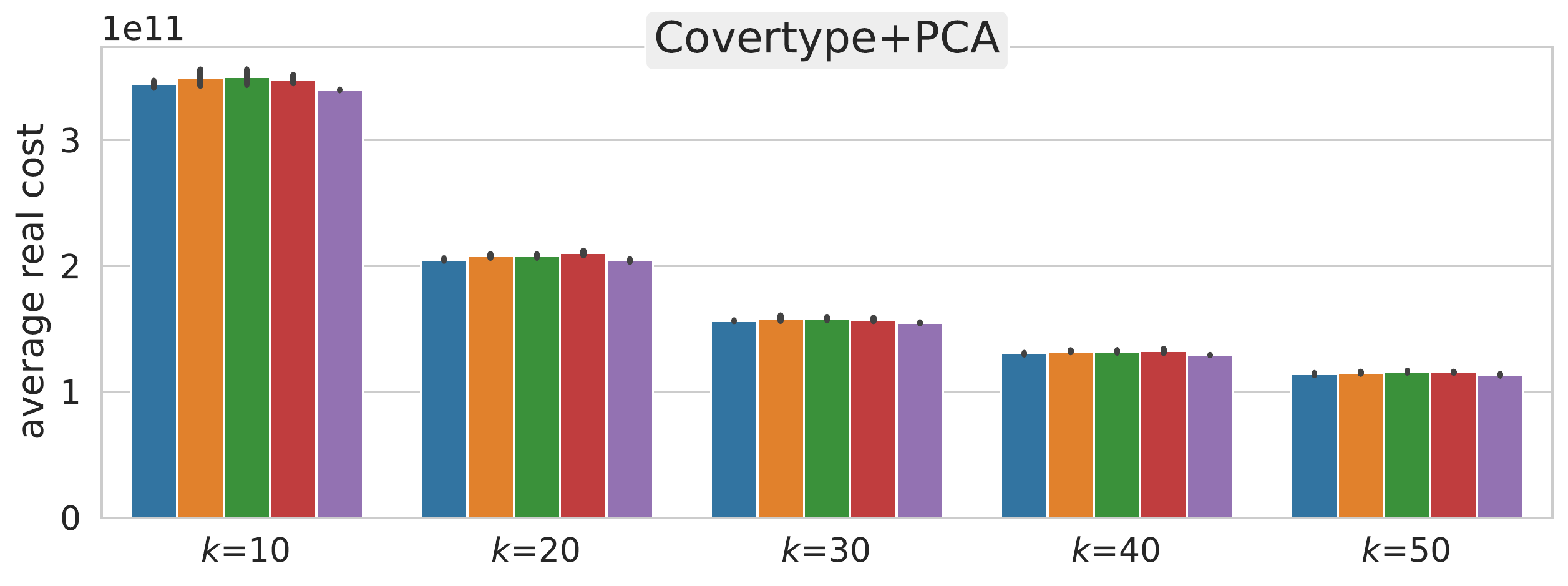}
 }
 \newline\newline
 \subfloat{
   \includegraphics[width=.47\textwidth]{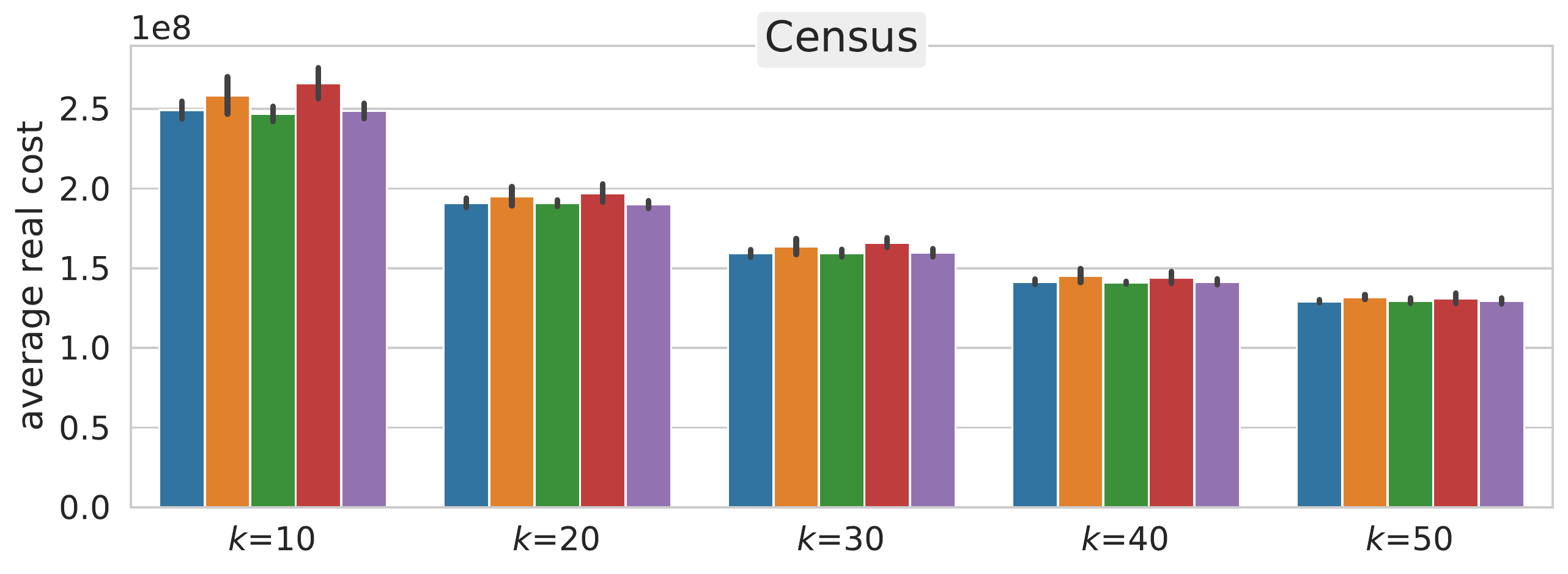}
 }
 \subfloat{
   \includegraphics[width=.47\linewidth]{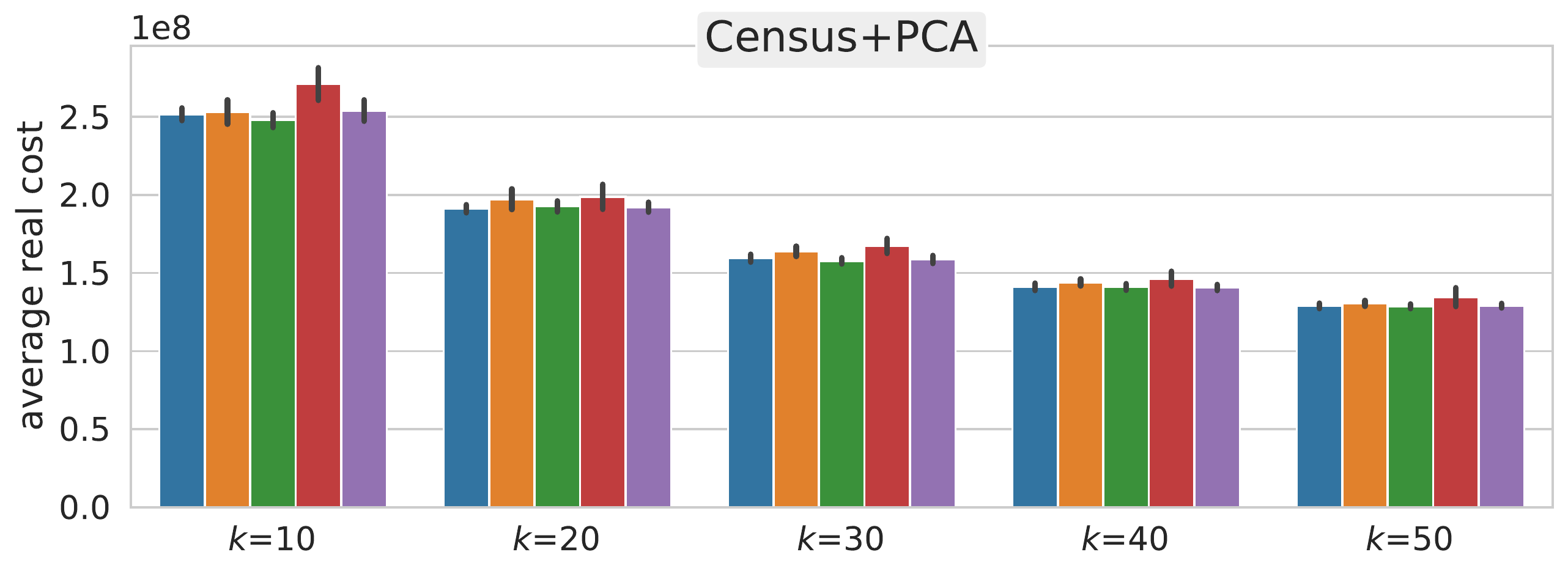}
 }
 \newline\newline
 \subfloat{
   \includegraphics[width=.47\textwidth]{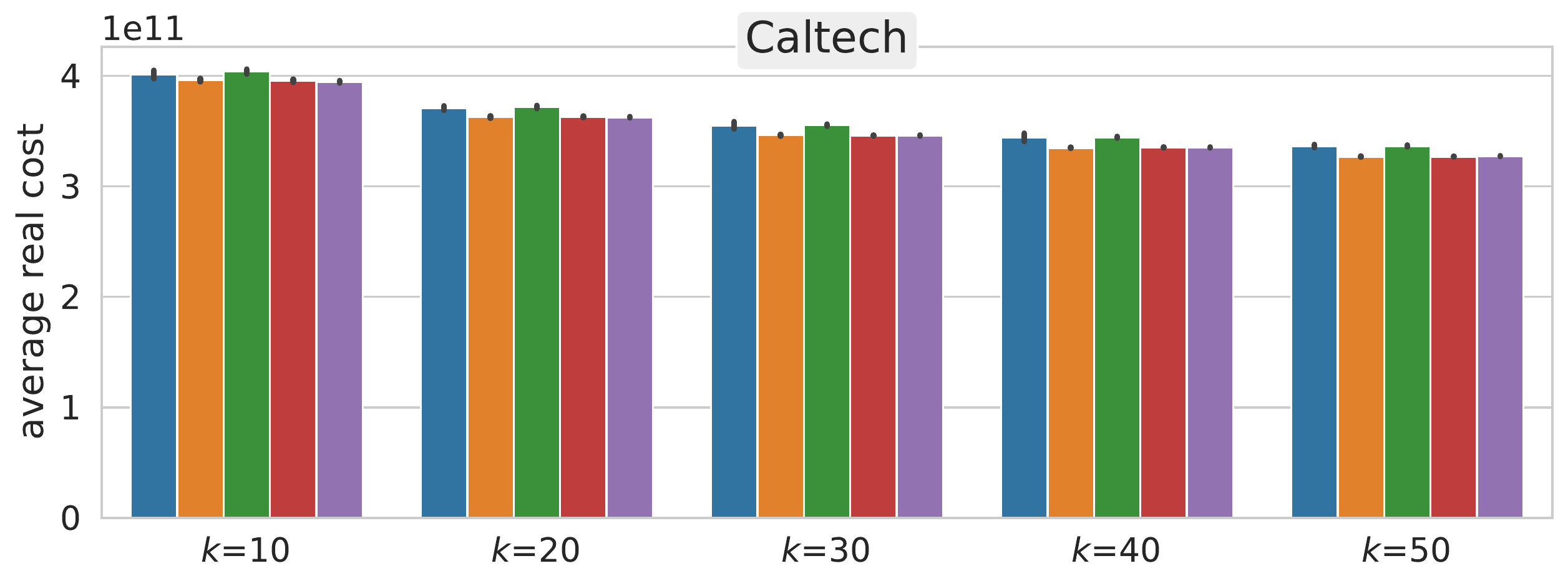}
 }
 \subfloat{
   \includegraphics[width=.47\linewidth]{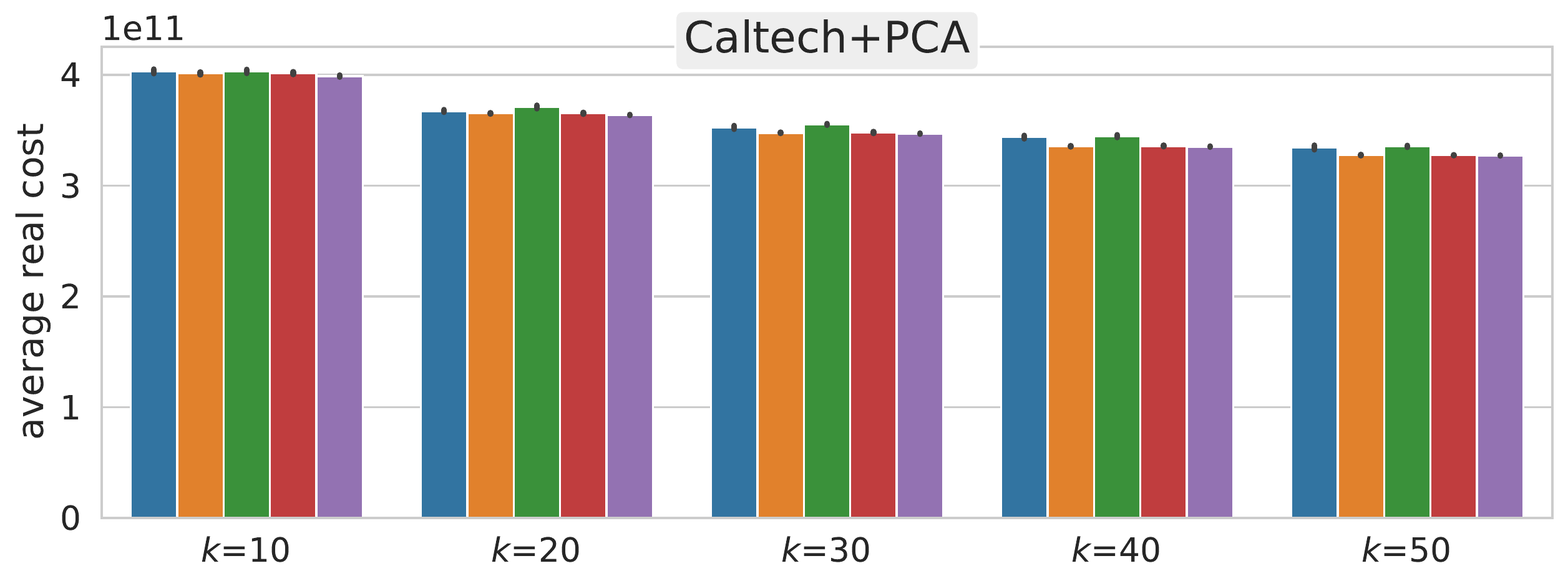}
 }
 \newline\newline
 \subfloat{
   \includegraphics[width=.47\textwidth]{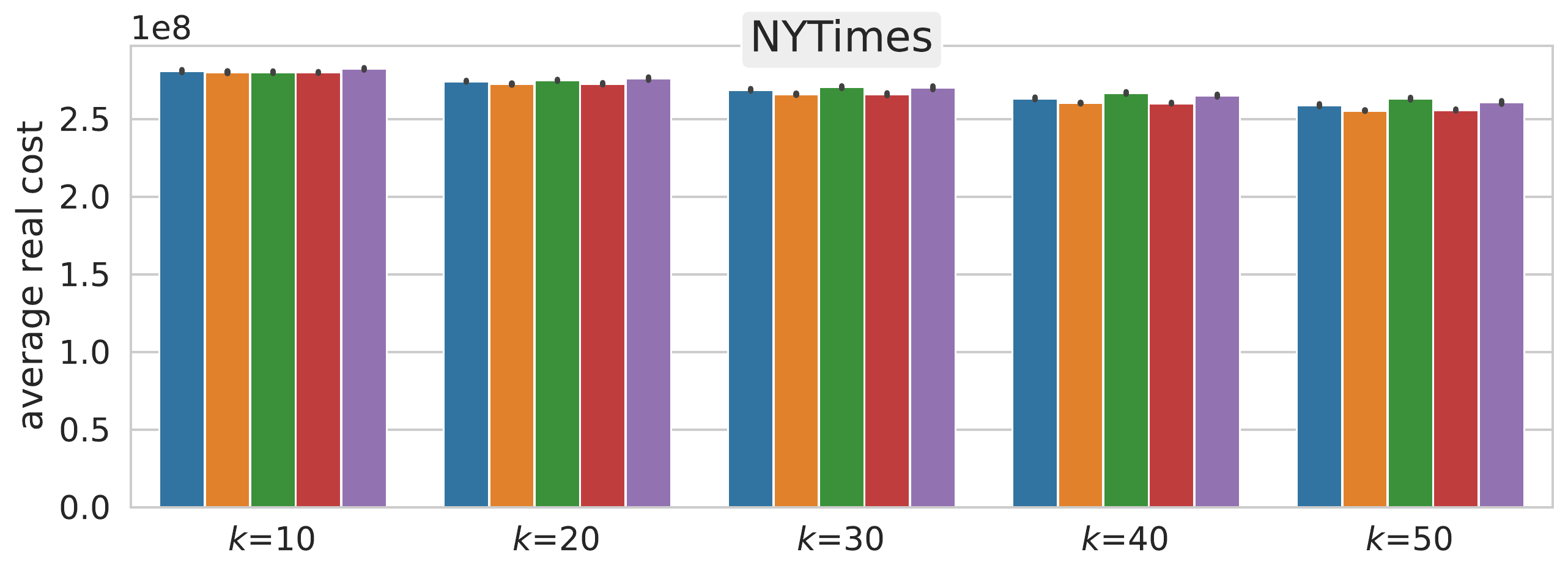}
 }
 \subfloat{
   \includegraphics[width=.47\linewidth]{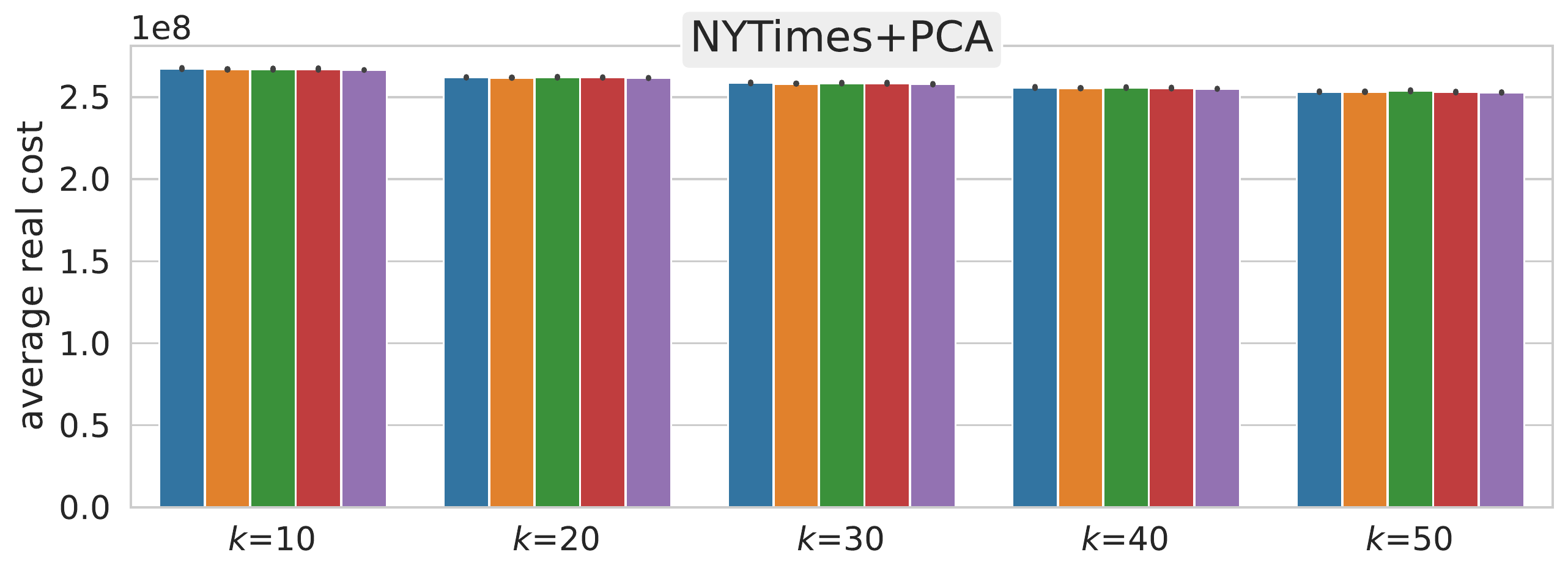}
 }
 \newline\newline
 \subfloat{
   \includegraphics[width=0.15\textwidth]{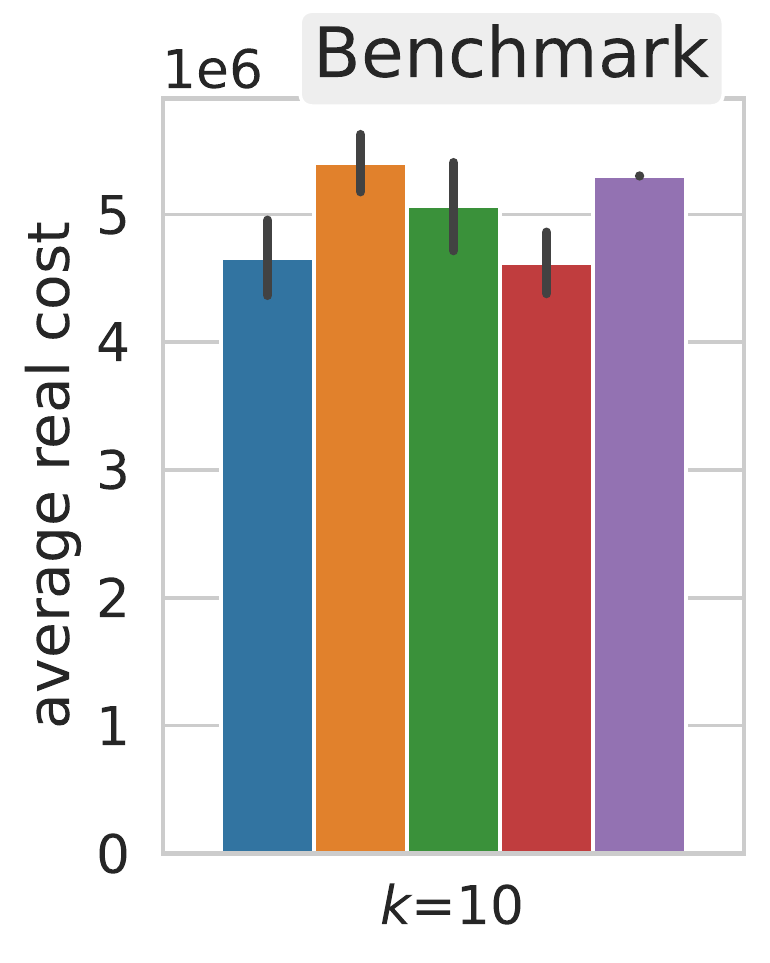}
   \includegraphics[width=0.165\textwidth]{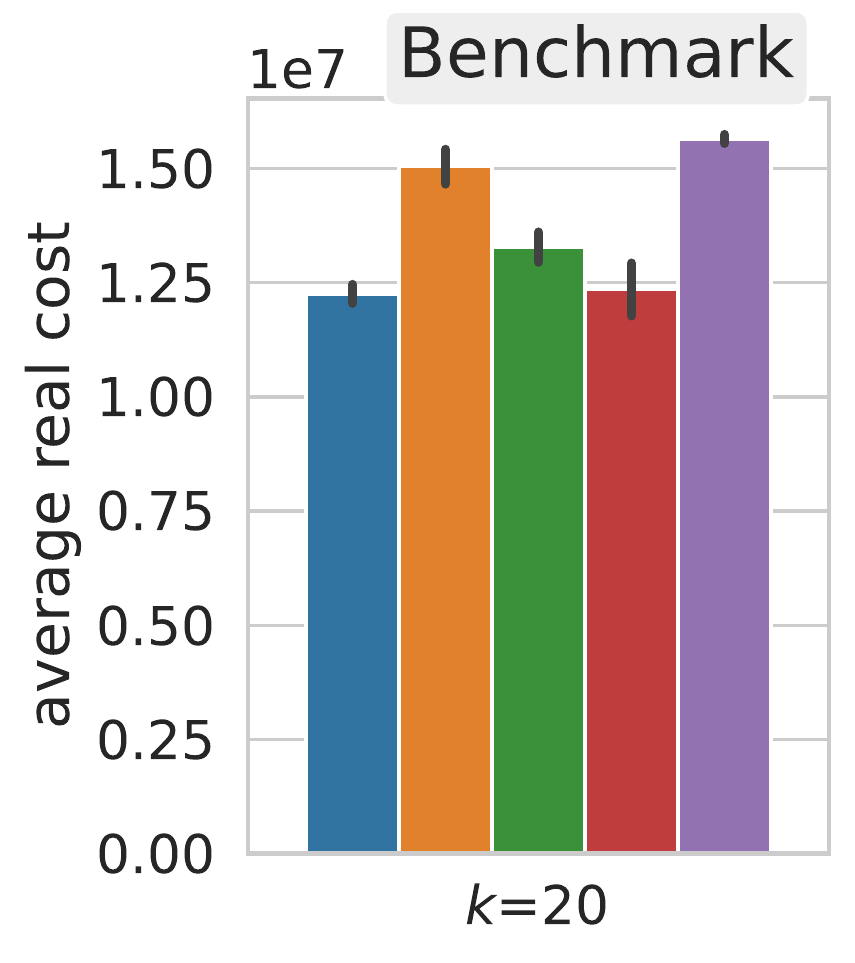}
   \includegraphics[width=0.16\textwidth]{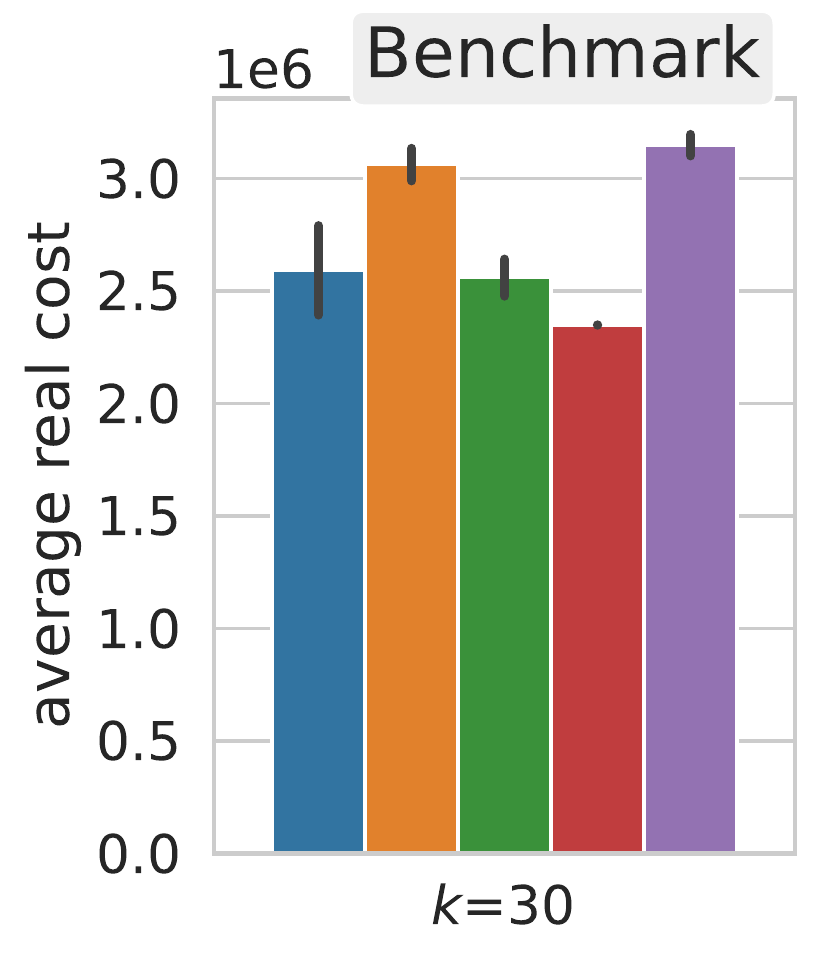}
   \includegraphics[width=0.31\textwidth]{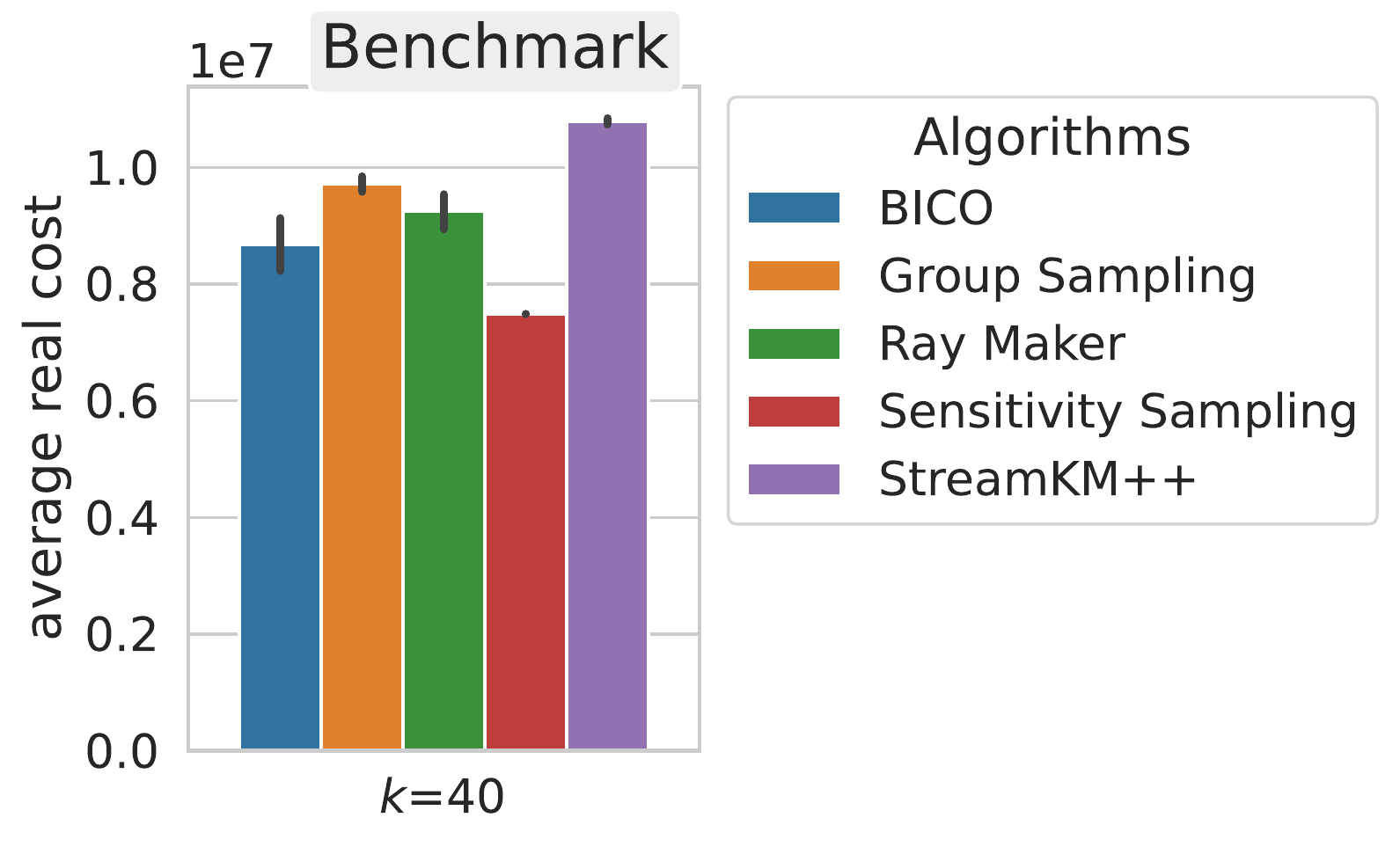}
 }
  \caption{The average costs (and standard deviations indicated by black bars) of running the evaluated coreset algorithms multiple times on different data sets. In general, the five coreset algorithms are able to compute coresets which result in solutions with comparable costs on the different real-world data sets. The differences in cost is more noticeable on the benchmark instances. Here, Senstivity Sampling is the winner because it seems to be better at capturing the correct ``clusters'' inherent in the benchmark instances.}
 \label{fig:real-costs}
\end{figure}



\end{document}